\definecolor{darkred}{rgb}{0.5, 0.0, 0.0}
\definecolor{skyblue}{RGB}{203, 221, 245}
\newcommand{\skyblue}{\rowcolor{skyblue}}
\newif\ifarxiv
\newtcolorbox{AIbox}[2][]{aibox,title=#2,#1}
\newtheorem{definition}{Definition}
\newtheorem{theorem}{Theorem}
\def\BibTeX{{\rm B\kern-.05em{\sc i\kern-.025em b}\kern-.08em
    T\kern-.1667em\lower.7ex\hbox{E}\kern-.125emX}}
\begin{document}

\title{EraRAG: Efficient and Incremental Retrieval Augmented Generation for Growing Corpora\\
}

\author{
    \IEEEauthorblockN{$^1$Fangyuan Zhang\IEEEauthorrefmark{1}\thanks{\IEEEauthorrefmark{1} ALL authors contributed equally to this research.},
                          $^2$Zhengjun Huang\IEEEauthorrefmark{1},
                          $^3$Yingli Zhou\IEEEauthorrefmark{1}\IEEEauthorrefmark{2}\thanks{\IEEEauthorrefmark{2} Yingli Zhou is the corresponding author.},
                      $^2$Qintian Guo,
 					  $^4$Zhixun Li,
                      $^5$Wensheng Luo,\\
                      $^6$Di Jiang,
                      $^3$Yixiang Fang,
                      $^2$Xiaofang Zhou,~\IEEEmembership{Fellow,~IEEE}}
    \IEEEauthorblockA{$^1$Huawei Hong Kong Research Center, Hong Kong; $^2$The Hong Kong University of Science and Technology, Hong Kong;\\
    $^3$The Chinese University of Hong Kong-Shenzhen, Shenzhen; $^4$The Chinese University of Hong Kong, Hong Kong;\\
    $^5$Hunan University, Changsha; $^6$WeBank, Shenzhen
    }
    \IEEEauthorblockA{zhang.fangyuan@huawei.com; zhuangff@connect.ust.hk; yinglizhou@link.cuhk.edu.cn; qtguo@ust.hk; \\ zxli@se.cuhk.edu.hk; luowensheng@hnu.edu.cn; dijiang@webank.com; fangyixiang@cuhk.edu.cn; zxf@cse.ust.hk}
}






\maketitle

\begin{abstract}
  Graph-based Retrieval-Augmented Generation (Graph-RAG) enhances large language models (LLMs) by structuring retrieval over an external corpus. However, existing approaches typically assume a static corpus, requiring expensive full-graph reconstruction whenever new documents arrive, limiting their scalability in dynamic, evolving environments. To address these limitations, we introduce EraRAG, a novel multi-layered Graph-RAG framework that supports efficient and scalable dynamic updates. Our method leverages hyperplane-based Locality-Sensitive Hashing (LSH) to partition and organize the original corpus into hierarchical graph structures, enabling efficient and localized insertions of new data without disrupting the existing topology. The design eliminates the need for retraining or costly recomputation while preserving high retrieval accuracy and low latency. Experiments on large-scale benchmarks demonstrate that EraRag achieves up to an order of magnitude reduction in update time and token consumption compared to existing Graph-RAG systems, while providing superior accuracy performance. This work offers a practical path forward for RAG systems that must operate over continually growing corpora, bridging the gap between retrieval efficiency and adaptability. Our code and data are available at \url{https://github.com/EverM0re/EraRAG-Official}.
\end{abstract}

\section{Introduction}

The emergence of Large Language Models (LLMs) such as GPT-4~\cite{gpt4}, Qwen~\cite{qwen}, and LLaMA~\cite{llama} has advanced natural language processing, achieving state-of-the-art results across various tasks~\cite{summary,code,sci01,li2023survey}. Despite their scalability and generalization, LLMs still struggle with domain-specific queries, multi-hop reasoning, and deep contextual understanding~\cite{domain1,domain2}, often yielding incorrect or hallucinated outputs~\cite{hallu,hallu1,hallu2} due to gaps in domain or real-time knowledge within their pretraining corpus. Fine-tuning with domain data~\cite{tune} can help but is often costly and yields limited gains in low-resource settings~\cite{limi1,limi2}. To address these limitations, Retrieval-Augmented Generation (RAG)~\cite{rag1,rag2,rag3,rag4,rag5} has emerged as a compelling approach, enriching LLMs with external knowledge to enhance factuality, interpret ability, and trust~\cite{trust,trust1,trust2,trust3,reason}. RAG retrieves relevant content from text corpora, structured datasets, or knowledge graphs to support tasks such as answering questions. Recent work has emphasized graph-structured memory, enabling richer semantic representation and multi-hop reasoning~\cite{gr1,gr2,gr3,gr4,gr5}.

\begin{figure}[t]
  \centering
  \includegraphics[width=\linewidth]{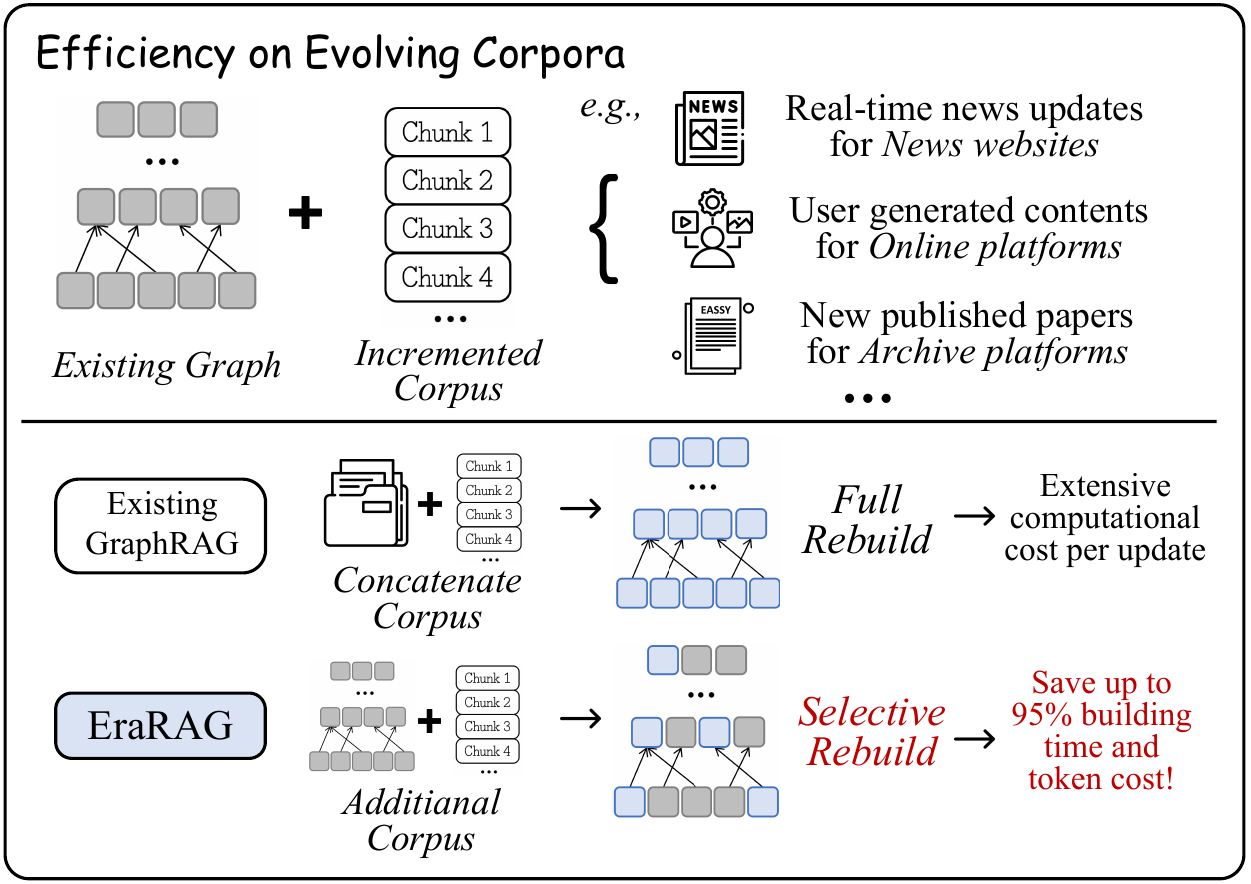}
  \caption{An illustrative example demonstrating the limitations of existing RAG methods and the advantages of \texttt{EraRAG}.}
  \label{fig:example}
\end{figure}

\begin{figure*}[t]
  \centering
  \includegraphics[width=0.8\linewidth]{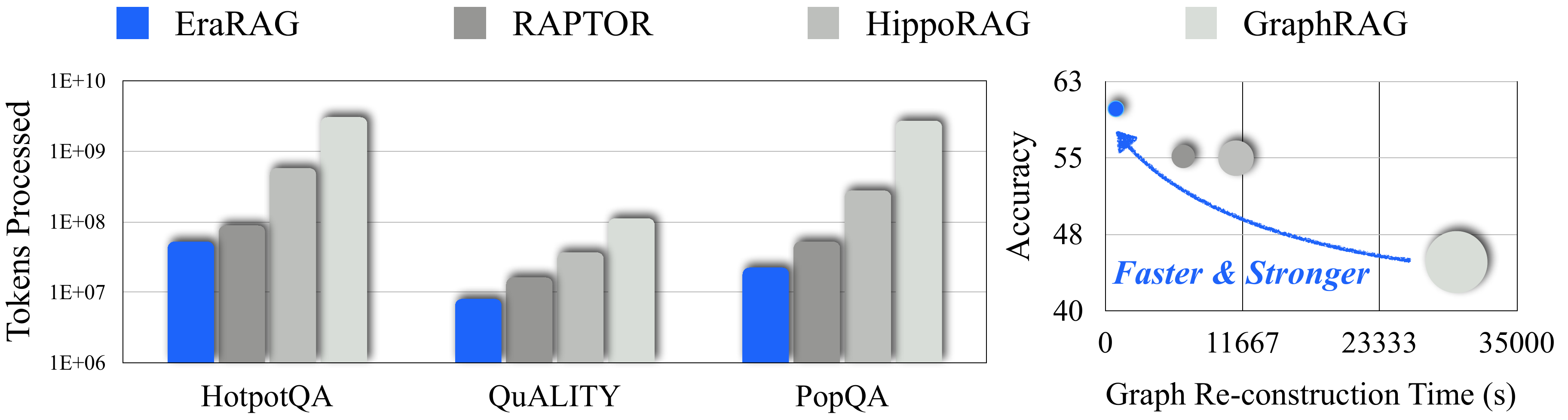}
  \caption{\textbf{Token processed (left)} of EraRAG and baselines via initial graph construction and 10 consecutive insertions. \textbf{Detailed performance (right)} of EraRAG and corresponding baselines on QuALITY. The size of each circle represents the total tokens processed.}
  \label{fig:first}
\end{figure*}

Graph-based RAG methods, despite their promising performance, still face significant challenges in scenarios involving growing corpora. Typical examples include daily additions to news collections, the constant influx of user-generated content on online platforms, and the accumulation of newly published research papers in academic repositories. 
For example, in the {\it Computation and Language} area alone, arXiv typically receives over 100 new paper submissions per day \cite{arxiv}, underscoring the need for efficient graph-based RAG methods capable of adapting to the growing corpora.
%
As illustrated in Figure~\ref{fig:example}, even a minor update to the underlying corpus typically necessitates a complete reconstruction of the graph in existing methods, resulting in substantial computational overhead. Although some prior work has explored dynamic analysis of changing corpora~\cite{dragin}, these approaches still suffer from high costs due to frequent and heavy structural updates.

To address this challenge, we propose \texttt{EraRAG}, a novel multi-layer graph construction framework that integrates hyperplane-based Locality Sensitive Hashing (LSH) for semantic similarity grouping. The overall architecture of \texttt{EraRAG} is illustrated in Figure~\ref{fig:overview}. \texttt{EraRAG} leverages hyperplane-based LSH and controllable partitioning to build a semantically structured graph with consistent granularity. LSH enables efficient chunk grouping, while the size thresholds ensure that each group maintains a consistent level of granularity and semantic abstraction—by containing a similar number of chunks with comparable similarity—we note that our design aligns with prior successes in graph-based RAG area \cite{orig,gr3,grag}, which leverage high-level abstractions to support multi-hop reasoning.
In addition, our LSH-based grouping approach reduces redundancy and improves retrieval accuracy without incurring costly recomputation, by efficiently exploiting the semantic structure of the corpus.

Crucially, \texttt{EraRAG} with hyperplane-based LSH supports fast, localized updates when new corpus entries arrive. Specifically, it encodes the new chunks into vector embeddings, inserts them into the appropriate buckets, and performs upward-propagating adjustments that are confined to the affected segments, without altering unrelated parts of the graph. This localized update strategy significantly improves efficiency by eliminating the need for costly global recomputation. To evaluate the effectiveness of \texttt{EraRAG} in growing-corpus scenarios, we divide the entire corpus into two parts: 50\% is used as the initial corpus, and the remaining 50\% serves as the growing portion. We simulate corpus expansion by incrementally inserting 5\% of the corpus at each step, resulting in ten rounds of insertion. As shown in Figure~\ref{fig:first}, \texttt{EraRAG} consumes far fewer tokens and requires substantially less running time in growing-corpus scenarios compared to existing methods, achieving state-of-the-art accuracy performance on challenging datasets such as QuALITY.
Our main contributions are summarized as follows:
\begin{itemize}[left=0pt]
\item\textbf{LSH-based Graph Construction Framework.} 
    We propose \verb|EraRAG|, a framework that constructs a multi-layered graph through recursive LSH-based segmentation and summarization. This structure not only preserves local and global semantic relationships for accurate retrieval, but also supports efficient, scalable updates when new content is introduced.

\item\textbf{Efficient Incremental Graph Update Mechanism.} 
    \texttt{EraRAG} enables fast and localized updates by combining hyperplane-based LSH with a merge-and-split strategy governed by tunable size thresholds. This design ensures consistent segment granularity, avoids unnecessary recomputation, and supports seamless integration of new corpus entries.

\item\textbf{Extensive evaluation on real-world benchmarks.} Experiments across multiple QA benchmarks demonstrate that \texttt{EraRAG} maintains strong retrieval accuracy in static settings, while in dynamic scenarios, it achieves an order of magnitude reduction in both update time and corresponding token costs compared to other methods, without sacrificing query quality.
\end{itemize}

\section{Related Work}
\noindent\textbf{Graph-based Retrieval Augmented Generation.} When faced with domain-specific or multi-hop queries, large language models often suffer from factual inconsistency or hallucinations—generating confident but inaccurate or nonsensical answers~\cite{hallu,hallu1}. These shortcomings arise from the static nature of LLM pretraining, which limits access to up-to-date or domain-specialized information. To address this issue, \textit{Retrieval-Augmented Generation} (RAG)~\cite{rag1,rag2,rag3} has emerged as a powerful framework that augments LLMs with access to an external knowledge corpus, enabling them to generate more accurate and contextually grounded responses. Typical RAG systems (e.g., Vanilla RAG) consist of the following stages.

\begin{enumerate}[left=0pt]
    \item \emph{Corpus Preprocessing:} The input corpus is first segmented into smaller units known as chunks for better retrieval. Each chunk is then embedded into a dense vector representation using a pre-defined embedding model. These vectors, together with optional metadata, are indexed and stored in a vector database for efficient retrieval. 
    
    \item \emph{Query-time Retrieval:} Upon receiving a user query, the same embedding model is used to encode the query into a vector. This vector is then used to retrieve the top-$k$ most similar chunks from the vector database—typically based on cosine similarity or other distance metrics. The retrieved chunks serve as external knowledge relevant to the query.
    
    \item \emph{Answer Generation:} The original question and the retrieved chunks are formatted into a structured prompt and passed into a language model. The LLM utilizes this information to generate an answer that is ideally more factual, contextualized, and grounded in the retrieved content.
\end{enumerate}

Despite its effectiveness, conventional RAG systems often retrieve semantically redundant or disconnected chunks, limiting their ability to support multi-hop reasoning and coherent generation. To address these limitations, Graph-based RAG~\cite{gr1} is introduced as a structured retrieval paradigm that models semantic relationships through graph-based organization. Contrary to the normal RAG framework, the corpus preprocessing stage of Graph-based RAG transforms the raw corpus into a graph or hierarchical structure, enabling more efficient and accurate retrieval during the generation phase~\cite{gr2}. By encoding semantic relationships between documents, passages, or entities ahead of time, Graph-based RAG reduces redundancy and improves the contextual coherence of retrieved results. This offline organization significantly accelerates retrieval at inference time and enhances the relevance of the supporting evidence, leading to improved response quality.

\noindent\textbf{Locality-Sensitive Hashing.} LSH \cite{IndykM98} is an efficient method for indexing high-dimensional data. The technique leverages hashing to map similar items to the same buckets with high probability. Variants such as E2LSH \cite{datar2004locality} and FALCONN \cite{AndoniILRS15} have gained attention for applications in approximate high-dimensional data retrieval. These methods offer tunable performance and theoretical guarantees but require significant redundancy and additional space cost to ensure accuracy. Unlike the traditional methods of applying LSH to high-dimensional vector retrieval, our method adopts a novel multi-layer framework and dynamic segmentation technology specifically tailored for the RAG system.


\noindent\textbf{Dynamic Retrieval.}
Recent research has focused on dynamic retrieval mechanisms that adapt to the evolving query context or model state during inference, aiming to enhance retrieval relevance and efficiency in context-dependent tasks. DRAGIN~\cite{dragin} detects information needs in real time via attention and uncertainty signals, triggering retrieval only when necessary, and formulates queries dynamically to minimize noise. LightRAG~\cite{lrag}, a graph-based method, introduces a modular retriever design that enables dynamic addition of new documents without rebuilding the full index, making it suitable for evolving corpora. DyPRAG~\cite{dyprag} dynamically injects retrieved content as lightweight parameter adapters into the language model during inference, enabling knowledge integration without altering the core model. However, these approaches largely overlook the consumption of dynamic updates under high-frequency data changes.

\begin{figure*}[t]
  \centering
  \includegraphics[width=\linewidth]{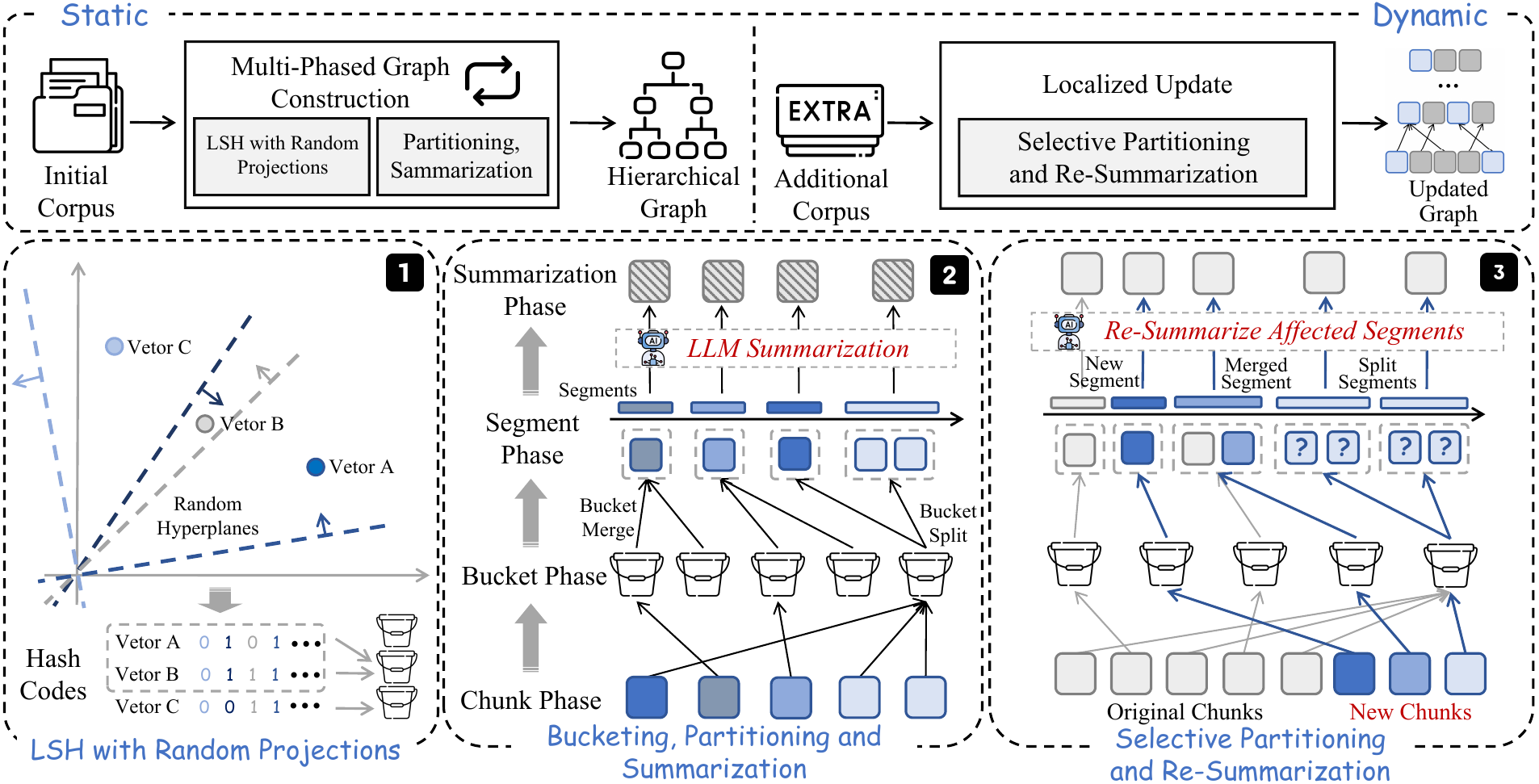}
  \caption{\textbf{Overview of EraRAG}. The framework constructs a hierarchical retrieval graph. In the static mode, initial chunks are bucketed via LSH with random hyperplane projections, and then iteratively partitioned and summarized through controlled bucket splitting and merging. In the dynamic mode, new data can be inserted by selectively re-partitioning and re-summarizing affected segments, enabling efficient updates with minimal overhead.}
  \label{fig:overview}
\end{figure*}

\section{Our Solution}

\subsection{High Level Idea}

The overall architecture of \texttt{EraRAG} is illustrated in Figure~\ref{fig:overview}. Given an input corpus, we first process it into textual chunks, and then encode them into vector embeddings. These embeddings are then processed via a hyperplane-based LSH scheme: each vector is projected onto $n$ randomly sampled hyperplanes and encoded as an $n$-bit binary hash code. Vectors with similar hashes—measured by Hamming distance—are grouped into the same bucket. Since bucket sizes vary depending on semantic similarity within the corpus, a second-stage partitioning is performed to produce the final segments. Segments are constrained by user-defined size bounds: small buckets are merged with adjacent ones, while large buckets are split. For each resulting segment, an LLM is used to summarize its constituent chunks into a new chunk. Built on the recursive construction architecture of RAPTOR~\cite{raptor}, this process of hashing, partitioning, and summarization is recursively applied to construct a multi-layered hierarchical graph, with each layer consisting of a certain granularity of the given corpus.

For dynamic updates, our system reuses the original set of hyperplanes to maintain consistency with the proposed LSH process. Newly added chunks are projected using the same hyperplanes and inserted into the corresponding buckets, which are then re-partitioned as necessary. Segments that are either newly assigned chunks or affected by bucket-level merging or splitting are re-summarized, and their parent nodes are marked as affected. These parent nodes are subsequently recursively re-hashed, re-partitioned, and re-summarized, propagating the changes upward throughout the graph. This approach facilitates localized updates, preserving the structural integrity of the graph while avoiding the need for a full reconstruction.

In the query processing stage, \texttt{EraRAG} adopts a collapsed graph search strategy, in which all nodes are treated uniformly within a flat retrieval space. Upon receiving a query, it is first encoded into an embedding vector using the same encoder employed during graph construction \cite{li2023gslb}. This query embedding is used to retrieve the top-$k$ most similar node embeddings from the vector database under a predefined token budget, selecting the most relevant chunks. The retrieved chunks are concatenated into a single context and passed to the language model together with the original query to generate the final response. This approach enables \texttt{EraRAG} to flexibly accommodate queries of varying granularity by leveraging the multi-level semantics encoded in the graph structure. Furthermore, \texttt{EraRAG} supports an optional biased retrieval strategy that allows users to adjust the proportion of retrieved detailed or summarized chunks based on prior knowledge of the query type.

\begin{table}[t]
\centering
\caption{Frequently Used Notations}
\label{tab:notation}
\scalebox{1.22}{
\begin{tabular}{cl}
\toprule
\textbf{Symbol} & \textbf{Definition} \\
\midrule
$\mathcal{C}$ & Input corpus (collection of text chunks) \\
$c_i$ & $i$-th chunk from corpus \\
$v_i \in \mathbb{R}^d$ & Normalized embedding vector for $c_i$ \\
$h_j \in \mathbb{R}^d$ & $j$-th random hyperplane \\
$k$ & Number of hyperplanes \\
$b_i \in \{0,1\}^k$ & Binary hash code of $v_i$ \\
$\mathcal{B}_{b_i}$ & Bucket indexed by binary code $b_i$ \\
$S_{\min}, S_{\max}$ & Lower and upper bounds for bucket size \\
$\mathcal{S}_i$ & Final adjusted segment (bucket) \\
$s_i$ & Summarized node from segment $\mathcal{S}_i$ \\
$G_\ell$ & Set of graph nodes at layer $\ell$ \\
$L$ & Total number of graph layers \\
$d$ & Embedding dimensionality \\
\bottomrule
\end{tabular}
}
\end{table}

\subsection{Hyperplane-based LSH for Reproducible Grouping}

\noindent The grouping phase plays a pivotal role in the efficient construction of graphs within \texttt{EraRAG}, as it directly influences how semantically similar chunk embeddings are organized for subsequent retrieval tasks. As previously highlighted, Locality-Sensitive Hashing stands out as a highly effective technique for the rapid and high-quality grouping of high-dimensional embeddings, making it a widely adopted approach in large-scale clustering and retrieval systems. Formally, an LSH family is defined as follows \cite{lsh1}:

\begin{definition}[Locality-Sensitive Hashing (LSH) Family]
Let $(\mathcal{X}, D)$ be a metric space and let $r, c > 0$ and $0 < P_1, P_2 < 1$ with $P_1 > P_2$. A family of hash functions $\mathcal{H} = \{ h: \mathcal{X} \rightarrow U \}$ is called $(r, cr, P_1, P_2)$-sensitive if for any $x, y \in \mathcal{X}$:

\begin{itemize}
  \item If $D(x, y) \leq r$, then $\Pr_{h \in \mathcal{H}}[h(x) = h(y)] \geq P_1$,
  \item If $D(x, y) > cr$, then $\Pr_{h \in \mathcal{H}}[h(x) = h(y)] \leq P_2$.
\end{itemize}
\end{definition}

However, conventional LSH methods are not well-suited for clustering and managing text embedding vectors in RAG scenarios. For instance, typical LSH approaches often employ uneven bucket assignment strategies, leading to some buckets containing a large number of elements while others contain few. In RAG settings, where a summary must be generated for each group to ensure high-quality and diverse retrieval corpora, it is crucial that the number of elements per group remains balanced. Moreover, in dynamic text corpora where new documents are continuously added, traditional LSH clustering faces a significant limitation: its lack of reproducibility. Specifically, the non-deterministic nature of bucket assignments means that the addition of new data requires a complete reconstruction of the graph, as existing clusters may be altered unpredictably.


To address these challenges, we propose a hyperplane-based LSH method that provides control over the number of elements in each group and supports efficient updates. The methodology for our grouping technique is illustrated in Section 1 of Figure~\ref{fig:overview}, where we project the high-dimensional embeddings of chunks onto a set of randomly sampled hyperplanes. This projection produces compact binary hash codes that facilitate the rapid organization of embeddings into consistent clusters. Each chunk embedding \( v_i \in \mathbb{R}^d \) is mapped to a $k$-bit code through the following procedure:
\[
\text{hash}(v) = [\text{sign}(v \cdot h_1), \cdots, \text{sign}(v \cdot h_k)]
\]
where \( \{h_1, \ldots, h_k\} \) are hyperplanes randomly drawn from \( \mathbb{R}^d \). Each bit in the generated hash corresponds to the sign of the dot product between the embedding and a hyperplane, determining on which side of the hyperplane the embedding lies. These binary codes function as bucket identifiers, effectively grouping semantically similar embeddings within the Hamming space. This method ensures the preservation of angular proximity: embeddings with smaller angular distances—i.e., higher cosine similarity—tend to produce hash codes that differ by fewer bits. More formally, for two normalized vectors \( v_1 \) and \( v_2 \), the probability that they are assigned the same bit on a randomly selected hyperplane is given by the following \cite{lsh1}:

\begin{theorem}
    Given two normalized vectors \( v_1, v_2 \in \mathbb{R}^d \) and a random hyperplane \( h \), the probability that both vectors lie on the same side of \( h \) is:
    \[
    P(h(v_1) = h(v_2)) = \frac{1 + \cos(\theta)}{2},
    \]
    where \( \theta \) represents the angle between \( v_1 \) and \( v_2 \).
\end{theorem}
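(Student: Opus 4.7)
The plan is a two-step reduction: first collapse the $d$-dimensional geometry to a two-dimensional picture, and then compute the probability by a direct angular argument on the unit circle. I would model a random hyperplane through the origin by its unit normal $h$, drawn uniformly on $S^{d-1}$ (equivalently, as $g/\|g\|$ for $g \sim \mathcal{N}(0, I_d)$). The event that $v_1$ and $v_2$ lie on the same side of the hyperplane is precisely $\mathrm{sign}(v_1 \cdot h) = \mathrm{sign}(v_2 \cdot h)$, and both inner products depend on $h$ only through its orthogonal projection $h'$ onto the plane $\Pi := \mathrm{span}(v_1, v_2)$, so the problem reduces to a computation inside $\Pi$.

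Next, I would invoke rotational invariance of the law of $h$ to conclude that the direction of $h'$ is uniform on the circle $S^1 \subset \Pi$. Choosing coordinates so that $v_1 = (1,0)$ and $v_2 = (\cos\theta, \sin\theta)$, the two normal lines $\{x \in \Pi : v_1 \cdot x = 0\}$ and $\{x \in \Pi : v_2 \cdot x = 0\}$ split $\Pi$ into four wedges. The two antipodal wedges on which $v_1 \cdot h'$ and $v_2 \cdot h'$ disagree in sign each span an angle of $\theta$, so $P[\text{disagree}] = 2\theta/(2\pi) = \theta/\pi$ and the same-side probability evaluates to $1 - \theta/\pi$.

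The main obstacle I anticipate is matching this answer to the closed form $(1 + \cos\theta)/2$ claimed in the statement. The geometric calculation above recovers the classical Charikar random-hyperplane collision probability, namely $1 - \theta/\pi$; the two expressions coincide at $\theta \in \{0, \pi/2, \pi\}$ but diverge elsewhere (for instance $2/3$ versus $3/4$ at $\theta = \pi/3$), so no algebraic rearrangement will close the gap. Before writing out a formal proof I would therefore verify which identity is intended: if it is the standard $1 - \theta/\pi$, the sketch above suffices and the proof is essentially complete, whereas if the $(1+\cos\theta)/2$ form is genuinely the target, a different sampling law for $h$ would be needed (for instance an offset hyperplane or a non-isotropic distribution), and the angular argument above would have to be replaced by a computation tailored to that distribution.
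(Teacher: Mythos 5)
Your reduction to the plane $\mathrm{span}(v_1,v_2)$ and the wedge-counting argument are exactly right, and your caution about the mismatch is well-founded: the standard collision probability for sign-of-random-projection (SimHash) LSH with a hyperplane whose normal is drawn uniformly from $S^{d-1}$ (equivalently, from an isotropic Gaussian) is indeed
\[
P\bigl[\operatorname{sign}(v_1\cdot h)=\operatorname{sign}(v_2\cdot h)\bigr] \;=\; 1-\frac{\theta}{\pi},
\]
which is the Goemans--Williamson/Charikar identity, and it does \emph{not} algebraically reduce to $\tfrac{1+\cos\theta}{2}$. Your counterexample at $\theta=\pi/3$ ($2/3$ versus $3/4$) settles the point. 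The paper does not actually supply a proof for this theorem; it states the formula and cites an LSH survey, so there is no competing argument for you to reconcile with. What you have found is an apparent error in the paper's stated closed form: either it is a typo for $1-\theta/\pi$, or it reflects a different quantity than the one described. Note that $\tfrac{1+\cos\theta}{2}=\cos^2(\theta/2)$, which is not the same-side probability for any natural sampling law on origin-centered hyperplanes; it would arise only under some other construction not described in the text. Your two-dimensional reduction via rotational invariance of the projection $h'$ is the standard and correct argument, and if the theorem were restated with $1-\theta/\pi$ on the right-hand side your sketch would constitute a complete proof. As written, the statement cannot be proved because it is false for the sampling law the paper describes, and you are right to insist on clarifying the intended formula before proceeding.
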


This characteristic guarantees that vectors with greater similarity are more likely to be assigned to the same bucket. Crucially, unlike conventional LSH implementations that discard projection information after the hashing step, our approach preserves the random hyperplanes used during hashing. This design ensures full reproducibility of the clustering process, allowing new embeddings to be consistently and deterministically assigned to the correct buckets without recomputing the entire corpus. By preserving the hyperplanes, our method enables efficient incremental updates to the graph, supporting dynamic changes in evolving corpora without requiring full reconstruction. Such reproducibility is crucial for maintaining the integrity of the grouping process during updates.


\subsection{Bucket Partitioning and Multilayer Graph Construction}

\begin{algorithm}[t]
\caption{Hyperplane-based LSH Segmentation}
\label{algo:lsh-tree}
\KwIn{Corpus $C$, number of hyperplanes $n$, size bounds $[S_{\min}, S_{\max}]$, max depth $L$}
\KwOut{Hierarchical LSH Graph $G$ with $L$ layers}

Tokenize $C$ into text chunks $\{c_i\}$\;
Compute normalized embeddings $\{v_i\}$ for all chunks\;

Sample $n$ random hyperplanes $\{h_j\}_{j=1}^{n}$\;
\For{each vector $v_i$}{
  Project $v_i$ onto hyperplanes to obtain hash code $b_i$ \tcp*[r]{via $\text{sign}(v_i \cdot h_j)$}
  Assign $v_i$ to bucket $B_{b_i}$ based on hash
}

\For{each bucket $B$}{
  \uIf{$|B| > S_{\max}$}{
    Split $B$ into smaller buckets of size $\leq S_{\max}$
  }
  \uElseIf{$|B| < S_{\min}$}{
    Merge $B$ with adjacent buckets until $\geq S_{\min}$
  }
}

\For{each adjusted bucket (segment) $S$}{
  Summarize chunks in $S$ using LLM $\to$ summary chunk $s_S$\;
}

Set $G_0 = \{s_S\}$ \tcp*[r]{This forms the layer-0 leaf nodes}
\For{$l = 1$ to $L$}{
  \uIf{stopping criterion met ($|G_{l-1}| < d + 1$)}{
    \Return final graph $G$
  }
  Compute embeddings for all chunks in $G_{l-1}$\;
  Repeat hashing, partitioning and summarizing (rows 4-13) to obtain new summarized nodes $G_l$\;
}
\Return Finalized graph {$G = \{G_0, G_1, ..., G_L\}$}
\end{algorithm}

\noindent Based on the proposed LSH-based grouping mechanism, the initial graph construction process can be outlined in Algorithm~\ref{algo:lsh-tree}. Following the initial grouping of chunk embeddings into buckets, we perform a secondary partitioning step to transform these raw buckets into well-structured segments suitable for hierarchical graph construction (Lines 7-11). Departing from conventional LSH-based clustering, our approach introduces an additional partitioning mechanism to regulate both the size and semantic consistency of each segment. This is essential because the number of chunks in each bucket can vary significantly due to uneven semantic density across the corpus.

Formally, let \( \mathcal{B} = \{B_1, B_2, \dots, B_m\} \) be the set of initial buckets derived from LSH hashing. For each bucket \( B_i \), we introduce user-defined lower and upper bounds \( S_{\min} \) and \( S_{\max} \) on acceptable segment sizes, where both $S_{min}$ and $S_{max}$ are $\Theta(c)$, and $c$ is a user-defined parameter. If \( |B_i| < S_{\min} \), the bucket is merged with adjacent ones \( B_{i-1} \) or \( B_{i+1} \) based on proximity in Hamming space. Conversely, if \( |B_i| > S_{\max} \), we split it into sub-buckets \( B_i^{(1)} \), \( B_i^{(2)} \). This yields a final set of segments \( \mathcal{S} = \{S_1, S_2, \dots, S_n\} \), each containing a manageable and semantically consistent group of chunks.

The choice of segment size bounds \( t_{\min} \) and \( t_{\max} \) critically influences the resulting graph structure. Narrow bounds enforce uniform segment sizes, yielding a well-balanced hierarchy with consistent abstraction across layers. However, this strictness often necessitates excessive merging and splitting, potentially grouping semantically dissimilar chunks and degrading summarization quality. In contrast, wider bounds preserve intra-segment coherence and improve summarization fidelity, but may result in structurally imbalanced graphs, where uneven segment sizes lead to inconsistent abstraction and suboptimal retrieval performance. This trade-off highlights the tension between structural regularity and semantic coherence, both of which are critical to effective hierarchical representation and retrieval. This will be further studied in the experiment section.

After segmentation, each segment \( S_i \) is summarized into a new chunk \( c_i^{(1)} \) via a large language model (Line 12-14). The embedding of the summarized chunk, \( v_i^{(1)} = \text{encode}(c_i^{(1)}) \), is then re-hashed using the same set of LSH hyperplanes. The entire process of hashing, partitioning, and summarizing is applied recursively to construct a multi-layered graph structure \( \mathcal{G} \), where each successive layer encodes progressively coarser semantic abstractions of the corpus (Lines 15-19). This recursive summarization process results in a hierarchical graph capable of handling both detailed and high-level queries.

\begin{theorem}
Let $|C|$ denote the number of text chunks in corpus $C$, $d$ be the embedding dimension, $n$ be the number of hyperplanes, $L$ be the user-defined maximum depth, and $\mathcal{S}_{\mathrm{LLM}}$ the amortised time required by the LLM to summarise \emph{one} segment.  The time complexity of Algorithm~\ref{algo:lsh-tree} is $O\!\bigl(|C|\,(n\,d+\mathcal{S}_{\mathrm{LLM}})\bigr)$ and the space complexity is $O(|C|\,d)$.
\end{theorem}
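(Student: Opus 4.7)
The plan is to analyse the cost of a single layer of Algorithm~\ref{algo:lsh-tree} and then bound the total cost by summing a geometric series in the layer index. Let $N_\ell$ denote the number of nodes at layer $\ell$, so $N_0=|C|$. I would first observe that at layer $\ell$ the algorithm performs four kinds of work: (i) computing embeddings, which is $O(N_\ell\,d)$ under the standard assumption that an embedding is produced in time proportional to $d$; (ii) projecting every vector onto the $n$ fixed hyperplanes, which costs $O(n\,d)$ per vector and therefore $O(N_\ell\,n\,d)$ per layer; (iii) the merge/split partitioning, which scans the buckets in sorted Hamming order and touches each vector a constant number of times, giving $O(N_\ell)$; and (iv) LLM summarisation, costing $\mathcal{S}_{\mathrm{LLM}}$ per segment. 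Per-layer work is therefore
\begin{equation*}
T_\ell \;=\; O\!\bigl(N_\ell\,(n\,d)\bigr)\;+\;O\!\bigl(M_\ell\cdot\mathcal{S}_{\mathrm{LLM}}\bigr),
\end{equation*}
where $M_\ell$ is the number of segments produced at layer $\ell$.

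Next I would use the segment-size bounds $S_{\min},S_{\max}=\Theta(c)$ specified in the preceding subsection to control both $M_\ell$ and the recursion. Because every segment contains at least $S_{\min}=\Omega(c)$ chunks, we have $M_\ell \le N_\ell/S_{\min} = O(N_\ell/c)$, so the summarisation term contributes $O(N_\ell\,\mathcal{S}_{\mathrm{LLM}}/c)=O(N_\ell\,\mathcal{S}_{\mathrm{LLM}})$. Moreover, since each segment at layer $\ell$ is collapsed into a single node at layer $\ell+1$, we obtain the recurrence $N_{\ell+1}=M_\ell \le N_\ell/S_{\min}$, i.e.\ $N_\ell \le |C|/c^{\ell}$. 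Summing across all $L$ layers then yields
\begin{equation*}
\sum_{\ell=0}^{L} T_\ell \;\le\; \sum_{\ell=0}^{L} O\!\bigl(N_\ell\,(n\,d+\mathcal{S}_{\mathrm{LLM}})\bigr)
\;\le\; O\!\bigl((n\,d+\mathcal{S}_{\mathrm{LLM}})\bigr)\cdot |C|\sum_{\ell\ge 0} c^{-\ell},
\end{equation*}
and the geometric sum is $O(1)$ for the constant $c>1$, giving the stated bound $O(|C|(n\,d+\mathcal{S}_{\mathrm{LLM}}))$. The stopping criterion $|G_{\ell-1}|<d+1$ only helps by truncating the series earlier, so it does not affect the upper bound.

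For the space bound I would argue analogously: at layer $\ell$ the algorithm stores $N_\ell$ embeddings of dimension $d$, $n$-bit hash codes (dominated by $d$ since $n\le d$ in practice, and otherwise subsumed), and a bucket index that is linear in $N_\ell$. Total storage is $\sum_{\ell=0}^{L} O(N_\ell\,d) \le O(|C|\,d)\sum_\ell c^{-\ell} = O(|C|\,d)$. The $n$ hyperplanes contribute an additional $O(n\,d)$, absorbed by $O(|C|\,d)$ whenever $n\le |C|$.

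The main technical obstacle is justifying the geometric decay $N_{\ell+1}\le N_\ell/S_{\min}$: this requires that the merge-and-split partitioning actually enforces $|S_i|\ge S_{\min}$ for \emph{every} emitted segment, including boundary segments that result from merges of very small buckets. I would handle this by a short argument that the merging pass proceeds greedily in Hamming-neighbour order and only terminates a segment once its accumulated size reaches $S_{\min}$, with the possible exception of at most one tail segment per layer, which contributes an additive $O(L)=O(\log_c |C|)$ term that is absorbed by the geometric sum. All remaining steps are routine accounting.
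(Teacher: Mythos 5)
Your proof is correct and follows essentially the same route as the paper's: you identify the same per-layer costs (embedding $O(N_\ell d)$, projection $O(N_\ell n d)$, LLM summarization per segment), invoke the same key inequality $N_{\ell+1} \le N_\ell/S_{\min}$ to obtain geometric decay, and close with the same geometric-series bound. Your space argument differs slightly in that you sum $O(N_\ell d)$ across all layers rather than citing the peak at layer~$0$, but both yield $O(|C|d)$, and your explicit remark about possible undersized tail segments is a welcome clarification that the paper glosses over without materially changing the bound.
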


\begin{proof} Let $N_\ell$ denote the number of chunks present at level $\ell$, with $N_0=|C|$. Processing a single level consists of three dominant actions. First, every chunk is embedded (or its cached embedding is reused), costing $O(N_\ell d)$. Second, each embedding is projected onto the $n$ hyper-planes to form its binary hash, adding another $O(N_\ell n d)$ operations. Third, all resulting segments are summarised once by the LLM; the number of freshly created parent nodes is $N_{\ell+1}$, so this step takes $O(N_{\ell+1}\mathcal{S}_{\mathrm{LLM}})$ time. Thus the total work at level $\ell$ is $O\!\bigl(N_\ell n d+N_{\ell+1}\mathcal{S}_{\mathrm{LLM}}\bigr)$. Because every summarised node must aggregate at least $S_{\min}>1$ children, we have the geometric decay $N_{\ell+1}\le N_\ell/S_{\min}$. Substituting this inequality and summing over all levels yields 
\begin{align}
T(|C|) &\le \sum_{\ell\ge0} \Bigl(N_\ell n d+N_{\ell+1}\mathcal{S}_{\mathrm{LLM}}\Bigr) \notag \\
      &\le |C| \Bigl(n d+\mathcal{S}_{\mathrm{LLM}}/S_{\min}\Bigr) \sum_{\ell\ge0} S_{\min}^{-\ell} \notag \\
      &= O\bigl(|C|\,(n d+\mathcal{S}_{\mathrm{LLM}})\bigr), \notag
\end{align}
because the geometric series $\sum_{\ell\ge0}S_{\min}^{-\ell}=1/(1-1/S_{\min})$ is a constant independent of $|C|$, $n$, or $d$. 

For space, the algorithm stores one $d$-dimensional vector per live chunk plus the $n$ hyper-plane normals. The largest number of simultaneously live chunks occurs at the input layer and equals $|C|$, so the peak memory footprint is $|C|d+nd=O(|C| d)$, completing the proof. 
\end{proof}

\subsection{Query processing for \texttt{EraRAG}}










\begin{algorithm}[t]
\caption{Query Processing for \texttt{EraRAG}}
\label{algo:query}

\KwIn{Query $q$, vector database $\mathcal{V}$, retrieval size $k$, token budget $T$}
\KwOut{Final answer $a_q$ generated by LLM}

Encode the query: $\mathbf{e}_q \leftarrow \texttt{encode}(q)$\;

Retrieve top-$k$ candidates from $\mathcal{V}$ under token budget $T$: $\mathcal{R}_q \leftarrow \texttt{vectordb\_search}(\mathbf{e}_q, k, T)$\;

Concatenate retrieved chunks: $\mathcal{C}_q \leftarrow \texttt{concat}(\mathcal{R}_q)$\;

Generate answer using LLM: $a_q \leftarrow \mathcal{M}(q, \mathcal{C}_q)$\;

\Return{$a_q$}\;
\end{algorithm}

In the retrieval stage, various methods have been proposed for navigating recursive hierarchical graphs. Notably, recent work~\cite{raptor} demonstrates that for such graph structures—including the one used in \texttt{EraRAG}—a global collapsed graph search (i.e., flat top-$k$ search) consistently outperforms hierarchical top-down structural search across different chunk sizes. To strike a balance between preserving fine-grained details and capturing high-level semantics, \texttt{EraRAG} adopts the collapsed graph search approach.

The process of query processing for \texttt{EraRAG} is outlined in Algorithm~\ref{algo:query}. Upon receiving a query $q$, \texttt{EraRAG} first encodes it into an embedding vector $\mathbf{e}_q \in \mathbb{R}^d$ using the same encoder employed during graph construction. This embedding is then submitted to a FAISS-based vector database $\mathcal{V}$, which indexes the embeddings $\{\mathbf{e}_i\}_{i=1}^N$ corresponding to all nodes in the collapsed retrieval graph, including both leaf chunks and summary nodes.

Similarity is measured using inner product or cosine similarity, depending on the FAISS index configuration. A top-$k$ retrieval is performed to efficiently select the $k$ most relevant nodes under a predefined token budget $T$. The retrieved chunks are concatenated into a single context, which is passed to the LLM alongside the original query. This collapsed retrieval strategy enables the model to jointly reason over both fine-grained content and high-level semantic abstractions, allowing \texttt{EraRAG} to effectively address diverse query types ranging from detail-oriented factual questions to paragraph-level summarization and reasoning tasks.

Upon further analysis, we observe that for fine-grained queries requiring specific textual details, retrieving from leaf nodes significantly improves the LLM’s ability to generate accurate responses. This can be attributed to the nature of the summarization process, in which certain low-level textual details may be omitted due to information compression. As a result, key information necessary for answering detailed queries may be lost in higher-level summary nodes.

Motivated by this observation, we propose an adaptive retrieval strategy that tailors the retrieval pattern according to the expected granularity of the query. Specifically, we introduce two distinct search patterns for \texttt{EraRAG}, each designed to emphasize different semantic levels of the graph. To support this, we define an additional parameter $p \in [0, 1]$, which controls the proportion of chunks retrieved from different layers, in addition to the top-$k$ retrieval budget.

\begin{itemize}[left=0pt]
    \item \textbf{Detailed search.} For queries that demand fine-grained, factual information, we prioritize retrieving chunks from the leaf layer. A top-$pk$ search is first performed over the leaf layer. The remaining $(1-p)k$ chunks are then selected via top-$(k - pk)$ search over the summarized layers, ensuring that sufficient contextual abstraction is still preserved.
    
    \item \textbf{Summarized search.} For queries that require understanding of high-level semantics or abstract narrative structure, we reverse the retrieval focus. A top-$pk$ search is conducted over the summary layers, followed by a top-$(k - pk)$ retrieval over the leaf nodes to supplement the results with essential factual grounding.
\end{itemize}

This mechanism ensures that a total of $k$ chunks are retrieved per query, while allowing the user to control the trade-off between detailed and generalized information according to the query’s nature. Note that in our experiments, we continue to employ the standard collapsed graph search to maintain general applicability. A more in-depth evaluation of these two adaptive search strategies is provided in the \href{https://github.com/EverM0re/EraRAG-Official}{technical report}.

\begin{theorem}
    
For a collapsed retrieval graph that stores $N$ embedded nodes of dimension $d$, a query requesting the top-$k$ neighbours under token budget $T$ runs in $
T_{\mathrm{query}}
     =O\!\bigl(d\;+\;\mathcal{V}_{\mathrm{search}}(N,d,k)\;+\;
                     \mathcal{S}_{\mathrm{LLM}}(T)\bigr),
$ where $\mathcal{V}_{\mathrm{search}}(N,d,k)$ denotes the time
complexity of the underlying vector database top-$k$ search and
$\mathcal{S}_{\mathrm{LLM}}(T)$ is the latency of the answer-generation
LLM when constrained to at most $T$ output tokens.
\end{theorem}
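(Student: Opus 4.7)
The plan is to prove the bound by a direct, line-by-line accounting of the four steps of Algorithm~\ref{algo:query}, and then to invoke the obvious fact that a sum of $O(\cdot)$ terms is dominated by its largest component. Because the theorem only asserts an additive decomposition into (i) the query-encoding cost, (ii) the vector-database lookup, and (iii) the LLM generation, no recursion or amortisation is needed; the argument is essentially a careful walkthrough of the pseudocode, leaving the two ``black-box'' subroutine complexities $\mathcal{V}_{\mathrm{search}}$ and $\mathcal{S}_{\mathrm{LLM}}$ as explicit parameters rather than trying to unfold them.

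First I would dispose of Line~1 (query encoding). Since the encoder produces a $d$-dimensional embedding $\mathbf{e}_q\in\mathbb{R}^d$, writing its output alone requires $\Omega(d)$ time, and under the standard assumption that the encoder's pass over a single short query runs in time linear in the output dimension this step contributes $O(d)$. Next I would observe that Line~2 is by definition one invocation of the vector-database top-$k$ primitive on $N$ indexed nodes, contributing exactly $\mathcal{V}_{\mathrm{search}}(N,d,k)$. For Line~3, the concatenation of the retrieved chunks produces a string of at most $T$ tokens (by the token budget), so it costs $O(T)$; crucially, this term is absorbed into $\mathcal{S}_{\mathrm{LLM}}(T)$ because any LLM that consumes a prompt of size $\Theta(T)$ and emits at most $T$ output tokens necessarily has latency $\Omega(T)$. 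Finally, Line~4 is a single LLM call with total token workload bounded by $T$, contributing $\mathcal{S}_{\mathrm{LLM}}(T)$. Summing gives
\[
T_{\mathrm{query}} \;=\; O(d) \;+\; \mathcal{V}_{\mathrm{search}}(N,d,k) \;+\; O(T) \;+\; \mathcal{S}_{\mathrm{LLM}}(T),
\]
and collapsing the $O(T)$ into $\mathcal{S}_{\mathrm{LLM}}(T)$ yields the claimed bound.

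The main obstacle, such as it is, is not combinatorial but definitional: I must justify that the $O(T)$ concatenation cost can be folded into $\mathcal{S}_{\mathrm{LLM}}(T)$, and that the encoder cost is genuinely $O(d)$ rather than something hidden (e.g.\ $O(|q|\,d)$ for a long query). I would handle the former by the $\Omega(T)$ lower bound on any generation procedure that writes $T$ tokens, and the latter by stating explicitly, as a modelling assumption consistent with the rest of the paper, that queries are of bounded length so that encoding time is dominated by the $d$-dimensional output. With these two mild conventions made explicit, the remainder of the proof is a one-line aggregation of the per-step costs.
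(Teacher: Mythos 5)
Your proof is correct and follows essentially the same approach as the paper: a direct, step-by-step cost accounting of Algorithm~\ref{algo:query}, with the encoding, vector search, and LLM generation contributing the three additive terms. The only cosmetic difference is in bookkeeping the concatenation step—the paper counts it as $O(k)$ dominated by the vector search, while you count it as $O(T)$ and absorb it into $\mathcal{S}_{\mathrm{LLM}}(T)$ via the $\Omega(T)$ lower bound on generation latency; both dispositions are valid and lead to the same bound.
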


\begin{proof}
    
The query pipeline starts with an embedding step: the raw text $q$ is fed through the same encoder used during graph construction, producing a $d$-dimensional vector $\mathbf{e}_q$.  This is a single forward pass whose cost scales linearly with the dimension, hence $\Theta(d)$.  The result is immediately normalised (if required by the index) and handed over to the vector database; this normalisation is a constant-factor operation and does not alter the asymptotic bound. 

The core of the procedure is the \texttt{vectordb\_search} invocation.  All distance computations, inverted-list probes, graph traversals, and heap updates incurred while extracting the $k$ nearest neighbours are captured by the term $\mathcal{V}_{\mathrm{search}}(N,d,k)$.  For a brute-force (IndexFlat) configuration this term equals $O(Nd)$, whereas for more sophisticated indices such as IVF-PQ or HNSW it becomes sub-linear in $N$ but still at least linear in $d$ and nearly linear in $k$.  Crucially, the presence of multiple hierarchical layers in \texttt{EraRAG} does not affect this complexity because the collapsed graph is treated as a single flat index of size $N$. 

Once the $k$ most similar nodes are returned, the algorithm merely concatenates their associated texts—an $O(k)$ operation that is dominated by the previous step—and forwards the resulting context, together with the original query, to the language model $\mathcal{M}$.  The generation stage produces at most $T$ tokens and therefore costs $\mathcal{S}_{\mathrm{LLM}}(T)$.  Any overhead introduced by the adaptive detailed/summarised retrieval policy is bounded by an extra scan over the same $k$ results and thus remains $O(k)$.  Summing the costs of encoding, vector search, and response generation yields the claimed overall time complexity: $T_{\mathrm{query}} =O\!\bigl(d+\mathcal{V}_{\mathrm{search}}(N,d,k)+ \mathcal{S}_{\mathrm{LLM}}(T)\bigr)$.
\end{proof}

\subsection{Selective Re-Segmenting and Summarization for Dynamic Corpora}

\begin{algorithm}[t]
\caption{Selective Re-Segmenting and Summarization for Dynamic Corpora}
\label{algo:dynamic-update}

\KwIn{Incremented chunks $c_{\text{new}}$, stored hyperplanes $\{h_j\}_{j=1}^k$, current graph $G$}
\KwOut{Updated graph $G$}
Compute the embedding for the new chunk $v_{\text{new}} = \text{encode}(c_{\text{new}})$ and its hash code $\text{hash}(v_{\text{new}})$\;
Assign $c_{\text{new}}$ to the corresponding leaf bucket based on $\text{hash}(v_{\text{new}})$, mark the incremented buckets as \underline{affected}\;

\For{each \underline{affected} bucket $\mathcal{B}_{b}$}{
    \If{$|\mathcal{B}_b| > S_{\max}$}{
        Split $\mathcal{B}_b$ into smaller buckets of size $\leq S_{\max}$, mark resulting buckets as \underline{affected}\;
    }
    \ElseIf{$|\mathcal{B}_b| < S_{\min}$}{
        Merge $\mathcal{B}_b$ with adjacent buckets until $|\mathcal{B}_b| \geq S_{\min}$, mark $\mathcal{B}_b$ as \underline{affected}\;
    }
    Finalize buckets into segments $\mathcal{S}_i$, the segment is marked as \underline{affected} if concluding \underline{affected} buckets\;
}
\For{$l = 1$ to $L$}{
\For{each \underline{affected} segment $\mathcal{S}_i$}{
    Compute a resummarization of segment $\mathcal{S}_i$:
    \[
    s_i = f_{\text{summarize}}(\text{Children}(\mathcal{S}_i))
    \]
    Delete the original chunk node and add all its children to the new summarized chunk. Mark the new summarized chunk as \underline{affected}\;
    \For{each \underline{affected} chunk $c_{affected}$}{
        Compute its embedding and hash code
        Repeat selective bucketing, segmenting and resummarizing.
    }
}
}
\textbf{Note:} If $l = current max layer$, $l < L$ and $N_{currentlayer} > S_{max}$, create a new layer and conduct another round of summairzation\;
\Return{Updated graph $G$}\;
\end{algorithm}

Graph construction is one of the most time-consuming operations in the Graph-RAG process. However, existing methods fail to address the challenges posed by dynamic corpora. In such cases, even minor additions to the corpus often necessitate a complete reconstruction of the graph, leading to significant time overhead and increased token consumption. To overcome these limitations and facilitate efficient updates in the presence of evolving corpora, \texttt{EraRAG} introduces a selective re-segmenting and re-summarizing mechanism that confines structural modifications to localized regions of the graph. This approach avoids the need for a full graph reconstruction by reusing the hyperplanes \( \{h_1, \dots, h_k\} \) generated during the initial LSH process. By doing so, we ensure the consistent hashing of new chunk embeddings, thus preserving the integrity of the graph while efficiently incorporating new data. The detailed procedure is outlined in Algorithm~\ref{algo:dynamic-update}, which corresponds to Section 3 of Figure~\ref{fig:overview}.

Given a newly added chunk \( c_{\text{new}} \), we compute its embedding \( v_{\text{new}} = \text{encode}(c_{\text{new}}) \), and derive its hash code $\text{hash}(v_{\text{new}})$ via the same LSH process with the hyperplane parameters (Line 1). The new chunk is inserted into the corresponding bucket (or a new bucket) and these buckets are marked as affected (Line 2). The affected buckets are then subjected to the same partitioning logic as in the static phase (Lines 3-8). If a segment is modified due to chunk insertion, merging, or splitting, it is marked as affected and re-summarized using the LLM (Lines 10-11). This localized update propagates hierarchically: when a segment \( S_i \) is updated, its summarized representation \( c_i^{(l)} \) at level \( l \) becomes outdated. For layers above the leaf level, operations other than simple addition are challenging to perform without compromising the integrity of the graph structure. To address this issue, we propose the following solution: when a re-summarization is required, a new node containing the updated summary is created. The original node, which holds the outdated summary, is removed, and all of its child nodes are reassigned to the child list of the new node. This new node is then treated as an incrementally added chunk in the next layer and subsequently undergoes encoding, hashing, bucketing, and partitioning procedures. Letting \( \mathcal{A}(S_i) \) denote the set of ancestors of \( S_i \), we recursively apply the update operation:
\[
\forall S_j \in \mathcal{A}(S_i), \quad \text{ReSummarize}(S_j) \leftarrow f_{\text{summarize}}(\text{Children}(S_j))
\]
In this way, only subgraphs affected by the incremented data are modified, which ensures that updates remain computationally bounded and structurally contained.

This selective propagation enables fast and consistent integration of new corpora, preserving the integrity of unaffected graph regions. As a result, the system maintains both the retrieval quality of its hierarchical representations and the efficiency of graph maintenance.
\begin{theorem}
Let $|C|$ be the number of chunks already stored in the graph, $\Delta$ be the number of newly–arriving chunks handled by a single update call, $d$ be the embedding dimension, $n$ be the number of stored hyper-planes, and $\mathcal{S}_{\mathrm{LLM}}$ the cost of one LLM summarisation. Assuming the size bounds satisfy $1<S_{\min}\le S_{\max}=O(1)$, the time cost of update algorithm is $T_{\text{update}}(\Delta)=O\!\bigl(\Delta\,(n\,d+\,\mathcal{S}_{\mathrm{LLM}})\bigr)$.

\end{theorem}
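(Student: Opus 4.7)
The plan is to adapt the level-by-level amortised counting used for the static construction bound (Theorem~2), but restrict the accounting to the subset of nodes that the update actually touches. Let $a_\ell$ denote the number of nodes at level $\ell$ that Algorithm~3 marks as \emph{affected}, with $a_0=\Delta$ by construction.

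First I would handle the leaf level. Embedding and $n$-hyperplane hashing of each of the $\Delta$ new chunks costs $\Theta(nd)$ per chunk, giving $O(\Delta\,nd)$ in total. Because $S_{\max}=O(1)$, every new chunk is either absorbed into one existing bucket or provokes an $O(1)$-sized cascade of adjacent merges/splits, so the number of level-$0$ segments whose summary must be rebuilt is $O(\Delta)$ and the split/merge bookkeeping also fits in $O(\Delta)$. One LLM call per rebuilt segment then contributes $O(\Delta\,\mathcal{S}_{\mathrm{LLM}})$ at this level.

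Next I would push the argument inductively up the hierarchy. Each affected segment at level $\ell-1$ gives rise to exactly one freshly created summary node that is treated as an incremental chunk at level $\ell$, and each such node can perturb only $O(1)$ neighbouring buckets at the new level. This yields the locality invariant $a_\ell \le c\cdot a_{\ell-1}$ for a constant $c$ depending only on $S_{\max}$, and in particular $a_\ell = O(\Delta)$ for every $\ell$. Consequently each layer costs $O\!\bigl(a_\ell(nd+\mathcal{S}_{\mathrm{LLM}})\bigr)$, the $nd$ piece being absorbed into the $\mathcal{S}_{\mathrm{LLM}}$ piece since every freshly created summary node already pays one LLM call. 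Summing over the at most $L$ active layers, together with the leaf-level $O(\Delta\,nd)$ term that is charged only once at level $0$, yields $O\!\bigl(\Delta\,nd + \Delta L\,\mathcal{S}_{\mathrm{LLM}}\bigr)$, which is exactly $O\!\bigl(\Delta(nd + L\,\mathcal{S}_{\mathrm{LLM}})\bigr)$. The optional branch that creates a new top layer when the current one overflows triggers at most once per call and adds only one extra pass of the same analysis, so it does not affect the asymptotics.

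The hard part is the locality invariant $a_\ell = O(\Delta)$. I would isolate it as a short combinatorial lemma showing that the delete-and-reinsert rule preserves a one-to-one correspondence between affected segments at level $\ell-1$ and freshly spawned parents at level $\ell$, and that the $S_{\max}=O(1)$ ceiling prevents any merge or split at a higher layer from enlarging the affected frontier by more than a constant factor. Once this invariant is in hand, the rest of the argument is routine arithmetic identical in spirit to the geometric sum employed in the static case.
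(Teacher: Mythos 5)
The level-by-level decomposition you propose, tracking $a_\ell$ affected nodes per layer, parallels the paper's per-chunk accounting in spirit; both arrive at the same per-layer tallies. However, there is a genuine gap in the key locality invariant: the bound $a_\ell \le c\cdot a_{\ell-1}$ with a generic constant $c>1$ does \emph{not} imply $a_\ell = O(\Delta)$. Iterating that inequality gives $a_\ell \le c^\ell\,\Delta$, which is exponential in the depth $\ell$, and since $L$ is allowed to scale as $\Theta(\log_{S_{\min}}|C|)$ this produces a factor polynomial in $|C|$ rather than the constant you need. "Enlarging the affected frontier by at most a constant factor per level" is exactly the kind of multiplicative slack that the theorem cannot tolerate.

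The paper closes this gap with an amortisation argument rather than a multiplicative recurrence: because a segment that has just split or merged can absorb $\Theta(S_{\min})$ further insertions or deletions before it violates its size bounds again, each chunk inserted at a given level triggers only $O(1)$ re-summarisations at that level \emph{amortised}, independently of how many cascades its descendants caused below. This gives $a_\ell = O(\Delta)$ directly at every level (additive, not multiplicative), from which the $O(\Delta\,L\,\mathcal{S}_{\mathrm{LLM}})$ summarisation term follows by summing over $L$ layers. To repair your sketch you should replace the combinatorial lemma "$a_\ell\le c\,a_{\ell-1}$" with an amortised lemma of the form "the total number of segment re-summarisations at level $\ell$ induced by $\Delta$ leaf insertions is $O(\Delta)$," proved via a potential function on each segment's distance from its size bounds (or, alternatively, observe that a re-summarised node typically re-hashes to its old bucket so a pure re-summary contributes a delete-plus-insert that is size-neutral). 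Your remaining arithmetic and your treatment of the leaf-level $O(\Delta\,n\,d)$ term and the optional new-top-layer branch are both fine and match the paper.
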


\begin{proof}
Consider first the operations triggered by a \emph{single} incoming chunk. The algorithm encodes the text into a $d$-dimensional vector and projects it onto $n$ stored hyper-planes, giving the hash code that determines the target bucket. Both the forward pass through the encoder and the $n$ inner products take $O(n d)$ time. Inserting the chunk into the bucket only updates a constant-size header and is
therefore $O(1)$.

The insertion may violate the size bounds of the bucket or of one of its ancestor segments. Because every bucket is limited to $S_{\max}=O(1)$ elements and each segment must contain at least $S_{\min}>1$ children, a split or merge can touch at most a constant number of adjacent buckets, and this perturbation propagates \emph{upwards} through at most the $L$ existing layers. At each layer, in the case of amortization, no more than segments of a constant number become inconsistent. This is because a segment that has just split or merged can accommodate updates and changes after $\Theta(S_{min})$. Therefore, the algorithm performs a re-summary call to the LLM in $O(\mathcal{S}_{{LLM}})$ time for each affected bucket. The purely algorithmic bookkeeping at that layer (creating or deleting a node, updating pointers, and rehashing the new summary) is again $O(1)$. Because the layer depth $L$ is a constant factor, the per-layer cost
is dominated by the LLM call, and the overall per-chunk cost is $O(n d + \,\mathcal{S}_{\mathrm{LLM}})$.

Finally, an update call handles $\Delta$ new chunks independently: no operation performed for one chunk alters the asymptotic amount of work required for another.  Summing the per-chunk cost over the $\Delta$
insertions therefore multiplies it by $\Delta$, which yields the total running time
$T_{\mathrm{update}}(\Delta)=O\!\bigl(\Delta\,(n d+\,\mathcal{S}_{\mathrm{LLM}})\bigr)$.
\end{proof}

\section{Experimental Setup}

\noindent\begin{tikzpicture}
\filldraw (0,0) -- (-0.15,0.08) -- (-0.15,-0.08) -- cycle ; 
\end{tikzpicture} \textbf{Dataset.} We evaluate \texttt{EraRAG}'s performance across the following five real-world question-answering datasets: \ifarxiv
\begin{itemize}[left=0pt]

\item \textbf{PopQA}~\cite{pop} is a 14k-scale open-domain question answering dataset consisting of entity-centric questions derived from Wikidata. It is designed to evaluate models' factual recall ability by focusing on questions that require retrieving specific, atomic facts. This dataset serves as a benchmark for testing the precision of knowledge retrieval in open-domain settings.

\item \textbf{MultiHopQA}~\cite{multihop} contains 2,556 carefully constructed questions that require multi-hop reasoning across multiple documents. The questions are designed such that answering them correctly necessitates combining evidence from two or more textual sources, making it a suitable benchmark for evaluating the reasoning and information integration capabilities of QA models.

\item \textbf{HotpotQA}~\cite{hotpot} is a large-scale dataset with 113k question-answer pairs based on Wikipedia. It focuses on multi-hop reasoning and compositional question answering, where models must not only locate relevant facts but also perform logical reasoning over them. It includes supporting fact annotations, making it valuable for both QA and explainability research.

\item \textbf{QuALITY}~\cite{pang2021quality} is a challenging multiple-choice dataset derived from long-form narrative and expository documents. It is explicitly constructed to assess deep reading comprehension and the ability to extract nuanced information over extended contexts. Questions often require synthesizing multiple pieces of information and understanding implicit relationships within the text.

\item \textbf{MuSiQue}~\cite{musique} comprises 25,000 multi-hop questions that emphasize logical composition and fact connectivity. Each question is crafted to require reasoning over multiple independent facts, often scattered across different documents. This dataset highlights models' capabilities in structured reasoning and robustness against spurious correlations.

\end{itemize}
\else
\textbf{PopQA}~\cite{pop} is a 14k-scale open-domain dataset with entity-centric questions from Wikidata, targeting factual recall. \textbf{MultiHopQA}~\cite{multihop} contains 2,556 complex questions requiring information integration across multiple documents. \textbf{HotpotQA}~\cite{hotpot} is a 113k-scale Wikipedia-based dataset focused on multi-hop reasoning and compositional QA. \textbf{QuALITY}~\cite{pang2021quality} is a multiple-choice dataset based on long-form documents, designed to assess deep reading comprehension. \textbf{MuSiQue}~\cite{musique} includes 25,000 multi-hop questions requiring reasoning over multiple facts, emphasizing logical composition and connection. 
\fi

\noindent\begin{tikzpicture}
\filldraw (0,0) -- (-0.15,0.08) -- (-0.15,-0.08) -- cycle ; 
\end{tikzpicture} \textbf{Baseline.} We evaluate \texttt{EraRAG} against a range of baselines grouped into three categories. \emph{Inference-only methods} include ZeroShot and Chain-of-Thought (CoT)~\cite{cot}, which rely solely on the language model's reasoning without external retrieval. \emph{Retrieval-only methods} such as BM25~\cite{bm25} and Vanilla RAG~\cite{vanilla} enhance the input using sparse or dense retrieval, but do not incorporate structural reasoning. \emph{Graph-based RAG methods} include GraphRAG~\cite{grag}, HippoRAG~\cite{hipporag}, RAPTOR~\cite{raptor}, and LightRAG~\cite{lrag}, which use graph structures to improve retrieval quality and multi-hop reasoning. Variants of LightRAG (i.e., Local, Global, and Hybrid) are  denoted as LightRAG-L/G/H for short.

\noindent\begin{tikzpicture}
\filldraw (0,0) -- (-0.15,0.08) -- (-0.15,-0.08) -- cycle ; 
\end{tikzpicture} \textbf{Metric.}
Following the evaluation protocols in~\cite{schick2023toolformer, mallen2022not}, we use Accuracy and Recall as performance metrics for the selected question answering datasets. Instead of requiring exact string matches, a prediction is considered correct if it contains the gold answer, enabling a more flexible assessment of answer relevance. Note that Recall is not reported for the QuALITY dataset, as it does not provide an exhaustive set of valid reference answers, making it infeasible to determine the proportion of relevant information retrieved. Accordingly, only Accuracy is used for this dataset.

\noindent\begin{tikzpicture}
\filldraw (0,0) -- (-0.15,0.08) -- (-0.15,-0.08) -- cycle ; 
\end{tikzpicture} \textbf{Implementation Details.} \ifarxiv In this work, we employ \texttt{Llama 3.1 8B Instruct Turbo}~\cite{kassianik2025llama} as the default LLM for all experiments. This model is a recent variant of the Llama-3 family, optimized for instruction-following and RAG tasks. Compared to earlier LLaMA variants, the 3.1 Instruct Turbo model demonstrates faster inference speed and improved performance on factual QA benchmarks. Owing to its high compatibility with RAG pipelines and widespread adoption in recent retrieval-based systems~\cite{marom2025general}, we adopt it as the backbone LLM across all experimental settings in this paper. For text representation, we employ \texttt{BGE-M3}~\cite{multi2024m3}, a state-of-the-art embedding model that supports both multilingual and multi-granularity retrieval. \texttt{BGE-M3} (Bridging General Embedding across Multilingual and Multi-task objectives) is a Transformer-based model trained on a diverse mixture of monolingual, bilingual, and instruction-tuned corpora, enabling it to effectively capture cross-lingual semantics and represent text across varied levels of granularity. Furthermore, it offers strong zero-shot retrieval performance across more than 100 languages, making it well-suited for general-purpose RAG applications. To ensure a fair and consistent comparison across the various RAG baselines, all methods are implemented within the unified framework proposed in~\cite{orig}. This framework provides a systematic platform for integrating and benchmarking both graph-based and non-graph-based retrieval-augmented generation architectures. In terms of token consumption, it is important to note that the token cost for an LLM call is comprised of two components: the prompt token, which represents the tokens utilized in providing the input, and the completion token, which includes the tokens generated by the model in response. For brevity, we report the sum of these two token costs, which we refer to as the token consumption for the transaction. Regarding time cost, since the retrieval process constitutes a minimal portion of the overall RAG workflow, we place emphasis on the time taken for graph construction or reconstruction. Specifically, we record the elapsed time from the initial chunking phase through to the completion of the final graph, and report this duration as the time consumption for the graph transaction.
\else
Llama-3.1-8B-Instruct-Turbo~\cite{kassianik2025llama} is used as the default LLM for all experiments, as it is widely adopted in recent RAG research~\cite{marom2025general}. For text representation, we employ BGE-M3~\cite{multi2024m3}, a state-of-the-art embedding model that supports both multilingual and multi-granularity retrieval. To ensure fair comparison and consistent evaluation across RAG baselines, all methods are implemented within the unified framework proposed in~\cite{orig}, which provides a systematic platform for integrating and benchmarking both graph-based and non-graph-based retrieval-augmented generation architectures. We define token consumption as the sum of the input prompt tokens and the output tokens, while the graph building time refers to the time elapsed from chunking to the completion of the graph construction.
\fi

\ifarxiv
\noindent\begin{tikzpicture}
\filldraw (0,0) -- (-0.15,0.08) -- (-0.15,-0.08) -- cycle ; 
\end{tikzpicture} \textbf{Prompt.} Within the overall pipeline of \texttt{EraRAG}, only two components involve interaction with the LLM: the summarization process during graph construction and the query-time answer generation. The prompts used for these stages are illustrated in Figure~\ref{fig:prompt}.

For the summarization stage, we identify two key limitations in prior work, particularly in RAPTOR~\cite{raptor}. First, the summaries generated often exceed the predefined token limit, resulting in truncated or incomplete summary chunks being stored in the graph. Second, the generated summaries frequently contain redundant lead-in phrases such as \textit{"Here is the summary of the given chunks:"}, which contribute no useful content but consume valuable tokens.

To address these issues, we design a tailored summarization prompt that explicitly specifies the token budget and instructs the LLM to stay within this limit. Additionally, the prompt emphasizes that only the summary content should be returned, discouraging the generation of boilerplate or meta-level language. This strategy ensures that the resulting summaries are both informative and token-efficient, leading to higher-quality summary nodes within the graph.

For the query stage, we design a concise and consistent prompt applicable to all baselines and to \texttt{EraRAG}, ensuring fair and controlled evaluation across different systems.

\begin{figure*}[h] 
\begin{AIbox}{Prompt used in \texttt{EraRAG}.}

{\bf Summary Prompt:} \\
Summarize the following text within \{\textcolor{darkred}{X}\} tokens. \\
Include as many key details as possible. Output ONLY the summary: \{\textcolor{darkred}{Grouped Chunks}\}

{\bf Query Prompt:} \\
Given external context:\{\textcolor{darkred}{Retrieved Chunks}\} \\
Give the best full answer amongst the option to question:\{\textcolor{darkred}{question}\}

\end{AIbox} 
\caption{Prompt used in \texttt{EraRAG}.}
\label{fig:prompt}
\end{figure*}

\fi

\begin{table*}[t]
  \centering
  \caption{QA performance (Accuracy and Recall) on RAG benchmarks using Llama-3.1-8B-Instruct-Turbo as the QA reader. The \textbf{best} and \underline{second-best} results are highlighted.}
  \label{tab:qa-performance}
  \renewcommand{\arraystretch}{1.15} 
  \scalebox{1.15}{
  \begin{tabular}{cc|cc|c|cc|cc|cc}
    \toprule
    \multicolumn{2}{c|}{Baseline} 
    & \multicolumn{2}{c|}{PopQA} 
    & QuALITY 
    & \multicolumn{2}{c|}{HotpotQA} 
    & \multicolumn{2}{c|}{MuSiQue} 
    & \multicolumn{2}{c}{MultihopQA} \\
    
    Type & Method & Acc & Rec & Acc & Acc & Rec & Acc & Rec & Acc & Rec \\
    \midrule

    \multirow{2}{*}{Inference-only} 
    & ZeroShot & 29.39 & 9.73 & 38.21 & 32.03 & 43.57 & 3.92 & 6.90 & 45.29 & 25.02 \\
    & CoT & 50.23 & 19.85 & 41.88 & 34.90 & 41.25 & 9.25 & 28.30 & 49.32 & 29.88 \\
    
    \midrule
    \multirow{2}{*}{Retrieval-only} 
    & BM25 & 45.09 & 21.20 & 40.24 & 36.30 & 50.09 & 13.68 & 10.58 & 42.90 & 18.37 \\
    & Vanilla RAG & 56.21 & 27.85 & 39.87 & 48.32 & 55.28 & 14.22 & 29.58 & 50.50 & 37.87 \\

    \midrule
    \multirow{6}{*}{Graph-based} 
    & LightRAG-L & 38.92 & 11.55 & 32.73 & 30.26 & 35.77 & 9.21 & 15.32 & 44.05 & 30.70 \\
    & LightRAG-G & 33.29 & 15.21 & 34.30 & 28.33 & 41.52 & 7.89 & 20.97 & 42.48 & 37.21 \\
    & LightRAG-H & 37.02 & 17.35 & 33.22 & 32.02 & 39.90 & 10.24 & 19.80 & 45.20 & 39.81 \\
    & GraphRAG & 49.98 & 21.28 & 44.90 & 40.84 & 47.39 & 19.32 & 28.81 & 56.98 & \textbf{45.53} \\
    & HippoRAG & \underline{59.29} & 25.88 & 53.31 & 50.46 & 56.12 & \underline{25.15} & \underline{39.71} & 57.49 & 42.17 \\
    & RAPTOR & 59.02 & \underline{27.34} & \underline{55.48} & \underline{53.29} & \textbf{61.97} & 24.02 & 37.92 & \underline{60.11} & 40.82 \\

    \midrule
    \skyblue{}\multirow{1}{*}{Our proposed} 
    & \textbf{EraRAG} & \textbf{62.98} & \textbf{28.54} & \textbf{60.25} & \textbf{55.39} & \underline{61.43} & \textbf{25.39} & \textbf{41.35} & \textbf{62.87} & \underline{42.98} \\
    
    \bottomrule
  \end{tabular}
  }
\end{table*}

\section{Experimental Results}

We now present the results of static QA evaluation and dynamic insertion experiments, through which we assess the update efficiency and dynamic structural robustness of EraRAG.

\noindent\begin{tikzpicture}
\filldraw (0,0) -- (-0.15,0.08) -- (-0.15,-0.08) -- cycle ; 
\end{tikzpicture} \textbf{Static QA Performance.} \ifarxiv Table~\ref{tab:qa-performance} presents the QA performance of EraRAG and baselines across five benchmarks. Among all methods, EraRAG consistently achieves the highest performance on most datasets, with substantial improvements in both Accuracy and Recall. Inference-only methods show limited effectiveness, particularly on open-domain and multi-hop tasks, due to the absence of external retrieval. Retrieval-only approaches offer moderate gains but are constrained by their lack of structural modeling, which limits their ability to support complex reasoning.

Graph-based RAG methods demonstrate stronger overall results, with RAPTOR and HippoRAG performing competitively across several benchmarks. Nonetheless, EraRAG still achieves notable performance gains, surpassing all baselines in 8 out of 10 metrics. In particular, on the QuALITY dataset, EraRAG improves Accuracy by 4.8\% over the second-best method(RAPTOR). One key factor underlying this improvement is the difference in grouping strategies. RAPTOR allows each chunk to belong to multiple clusters, which can increase coverage but often leads to redundancy and inconsistent abstraction. In contrast, EraRAG enforces a one-to-one assignment with size constraints on each segment, ensuring more coherent segmentation and stable hierarchy. This controlled granularity enables more accurate summarization and retrieval, contributing to the observed gains, especially on challenging datasets like QuALITY.

Given that the majority of baselines exhibit limited performance and do not approach competitive levels, subsequent dynamic evaluation will be restricted to the strongest graph-based methods: GraphRAG, HippoRAG, and RAPTOR.
\else
Table~\ref{tab:qa-performance} summarizes the QA results of EraRAG and baselines on five benchmarks. EraRAG consistently outperforms all methods in most cases, showing significant gains in Accuracy and Recall. Inference-only models perform poorly on open-domain and multi-hop tasks due to lacking external retrieval, while retrieval-only methods offer moderate improvements but are limited by weak structural reasoning. Graph-based RAGs achieve stronger results overall. Yet, \texttt{EraRAG} surpasses all baselines on 8 of 10 metrics, notably improving QuALITY Accuracy by 4.8\% over RAPTOR. This is largely due to its segmentation strategy: unlike RAPTOR's overlapping clustering, which increases coverage but introduces redundancy, EraRAG uses one-to-one assignments with size constraints for coherent segmentation and stable hierarchy. This controlled granularity enhances summarization and retrieval, especially on complex datasets.

Given that the majority of baselines exhibit limited performance and do not approach competitive levels, subsequent dynamic evaluation will be restricted to the strongest graph-based methods: GraphRAG, HippoRAG, and RAPTOR.
\fi

\begin{figure}[t]
  \centering
  \vspace{0pt}
  \includegraphics[width=0.45\textwidth]{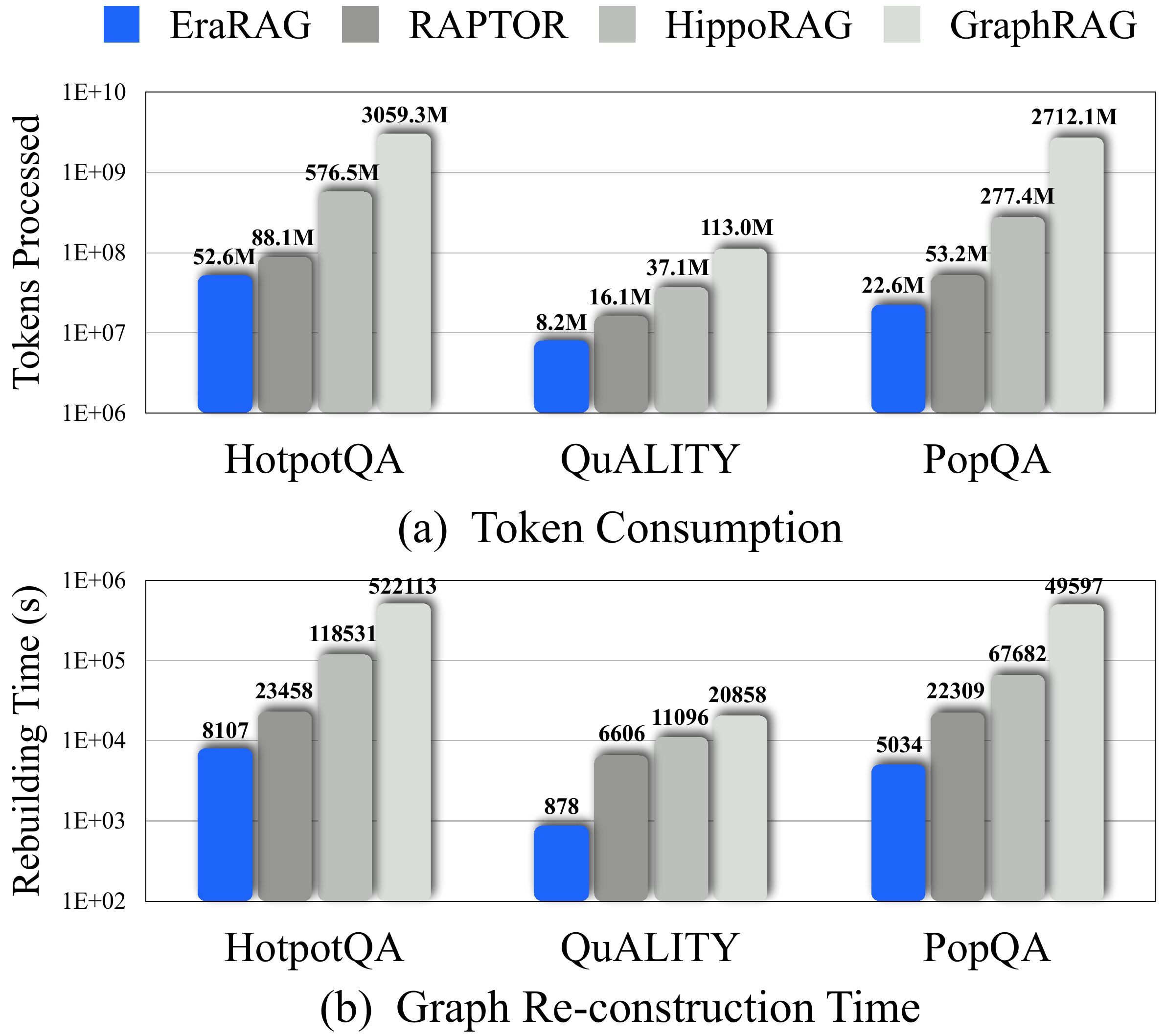}
  \caption{Token cost and graph rebuild time throughout insertions.}
  \label{fig:cost}
  \vspace{0pt}
\end{figure}

\noindent\begin{tikzpicture}
\filldraw (0,0) -- (-0.15,0.08) -- (-0.15,-0.08) -- cycle ; 
\end{tikzpicture} \textbf{Dynamic Insertion Consumption.} To evaluate the dynamic update efficiency of EraRAG, we design an experiment simulating real-world scenarios where new corpora are incrementally added. Each QA dataset is split into two halves: the first 50\% is used to construct the initial graph, and the remaining 50\% is divided into ten equal segments representing sequential updates. For baselines without dynamic support, each update reconstructs the graph from scratch, including the base 50\% and an additional 5\%, simulating cumulative growth. Evaluations are conducted on HotpotQA, PopQA, and QuALITY, recording token consumption and graph construction time at each stage. Note that only graph construction time is measured, excluding preprocessing steps like community detection in GraphRAG.

Figure~\ref{fig:cost} shows that EraRAG consistently achieves the lowest token and time cost across datasets. Compared to RAPTOR, EraRAG reduces token usage by up to 57.6\% (on PopQA) and graph rebuilding time by 77.5\% (on QuALITY). While RAPTOR is already lightweight, EraRAG’s selective reconstruction further improves efficiency. In contrast, GraphRAG and HippoRAG incur significantly higher costs: GraphRAG performs full re-clustering after each update, causing excessive time and memory usage; HippoRAG, though incremental, involves repeated path expansion and semantic filtering, leading to inflated token usage. These results demonstrate that EraRAG's selective update mechanism is well-suited for efficient adaptation to evolving corpora in practical deployments.

\noindent\begin{tikzpicture}
\filldraw (0,0) -- (-0.15,0.08) -- (-0.15,-0.08) -- cycle ; 
\end{tikzpicture} \textbf{Incremental Performance Evaluation.}

Having established the efficiency of our approach, we now turn to evaluating the effectiveness of EraRAG's selective re-construction mechanism. Specifically, we examine its ability to incorporate new information into the existing graph without disrupting previously established structures. To this end, we perform an incremental performance evaluation. In contrast to the preceding experiment, which focuses on computational efficiency and construction cost, this evaluation assesses retrieval quality by measuring Accuracy and Recall after the initial graph construction and following each dynamic update. 

Experimental results on HotpotQA, PopQA, and QuALITY are presented in Figure~\ref{fig:perf}. In each subplot, the dotted horizontal lines represent the Accuracy and Recall obtained from a full static graph built using the complete corpus in one go. The solid lines indicate the incremental performance of \texttt{EraRAG} as new data segments are dynamically inserted in stages.

Across all three datasets, both Accuracy and Recall curves show a clear upward trend, indicating that each incremental addition contributes additional useful information to the graph and progressively improves retrieval quality. This confirms that the selective re-construction mechanism effectively incorporates new content without degrading existing structures. Also note that the final retrieval performance after the last update stage nearly converges to the corresponding static upper bound. That is, \texttt{EraRAG} not only maintains structural integrity throughout the update process but also achieves retrieval effectiveness comparable to that of a fully reconstructed graph. These results highlight the robustness of our selective updating strategy.

\begin{figure*}
  \centering
  \includegraphics[width=0.75\linewidth]{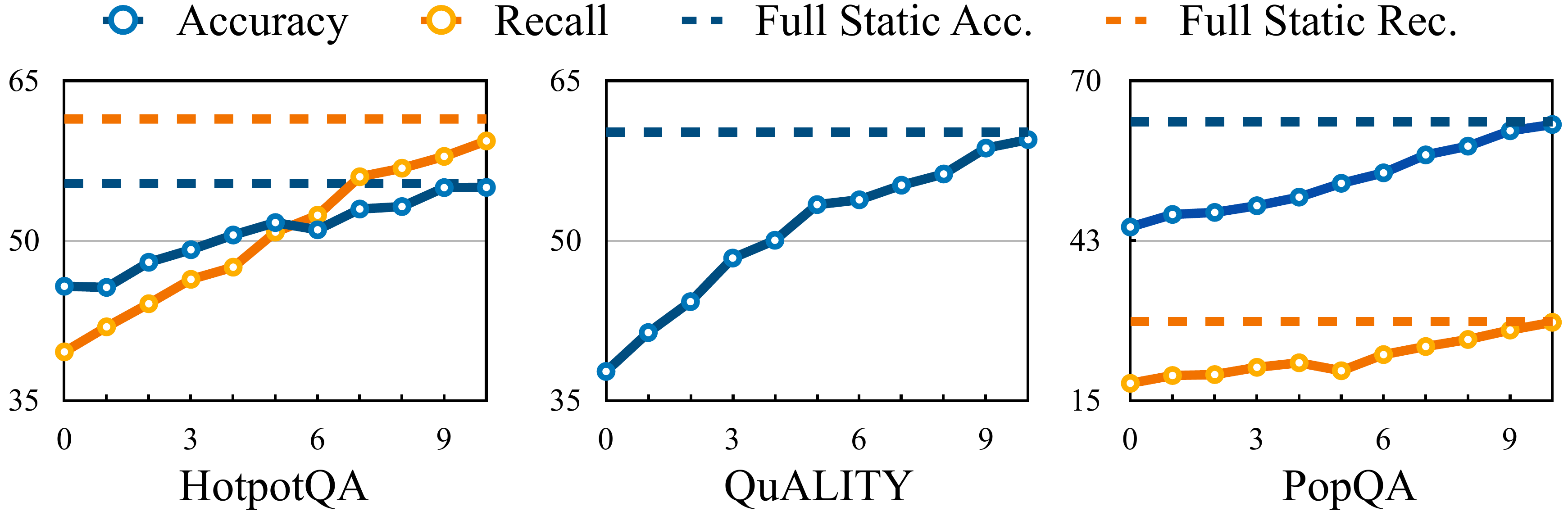}
  \caption{EraRAG performance over incremental insertion. Dotted lines represent static full-graph performance, while solid lines show EraRAG's incremental performance as new data is inserted.}
  \label{fig:perf}
\end{figure*}

\section{Discussions}

  

\begin{table}[t]
  \centering
  \caption{Abstract QA results: EraRAG vs. GraphRAG (top) and EraRAG vs. RAPTOR (bottom).}
  \label{abstract}
  \scalebox{1.2}{  
  \begin{tabular}{c|cccc}
    \toprule
    Dataset & Comp. & Div. & Emp. & Overall \\
    \midrule
    Mix & 56\% & 52\% & 49\% & 51\% \\
    CS & 53\% & 58\% & 48\% & 55\% \\
    Legal & 33\% & 54\% & 42\% & 42\% \\
    MultiSum & 56\% & 42\% & 55\% & 52\% \\
    \bottomrule
  \end{tabular}
  }

  \vspace{2mm}
  \scalebox{1.2}{ 
  \begin{tabular}{c|cccc}
    \toprule
    Dataset & Comp. & Div. & Emp. & Overall \\
    \midrule
    Mix & 67\% & 62\% & 47\% & 54\% \\
    CS & 52\% & 48\% & 41\% & 46\% \\
    Legal & 58\% & 42\% & 61\% & 52\% \\
    MultiSum & 51\% & 57\% & 53\% & 53\% \\
    \bottomrule
  \end{tabular}
  }
\end{table}

\begin{table*}[t]
  \centering
      \caption{Query performance of EraRAG with different initial graph coverage.}
  \label{segmen}
  \scalebox{1.2}{
  \begin{tabular}{c|ccccccccccc}
  \toprule
    Performance & 0\% & 10\% & 20\% & 30\% & 40\% & 50\% & 60\% & 70\% & 80\% & 90\% & 100\%\\
    \midrule
    Accuracy & 41.3 & 45.9 & 53.9 & 58.2 & 61.1 & 62.3 & 62.1 & 62.0 & 62.5 & 61.4 & 62.9 \\
    Recall & 13.9 & 14.0 & 22.9 & 32.8 & 36.6 & 39.3 & 40.0 & 42.0 & 42.8 & 42.3 & 42.9 \\
    \bottomrule
  \end{tabular}
  }
\end{table*}

\noindent\begin{tikzpicture}
\filldraw (0,0) -- (-0.15,0.08) -- (-0.15,-0.08) -- cycle ; 
\end{tikzpicture} \textbf{Exp-1: Efficiency Analysis under Small-Scale Incremental Insertions.} In the dynamic insertion consumption experiments, we conducted ten consecutive insertions and measured the total graph updating time and token consumption. Although each insertion accounted for only 5\% of the total corpus, the absolute data size was still considerable given the scale of the dataset. To further evaluate the performance of \texttt{EraRAG} and baseline methods under more fine-grained, small-scale incremental insertions, we conducted an additional experiment on the MultihopRAG dataset. Specifically, we first constructed the initial graph using 50\% of the entire corpus, followed by a single insertion consisting of one entry, which was segmented into two chunks. We recorded the graph average update time and token consumption and analyzed the results.

The results presented in Figure~\ref{fig:onechunk} show that the advantage of \texttt{EraRAG} becomes more pronounced in small-scale updates. \texttt{EraRAG} completes the update in approximately 20 seconds, whereas baseline methods require significantly more time. Compared to the RAPTOR and HippoRAG methods, EraRAG achieves more than an order of magnitude reduction in both update time and token cost. When compared to the GraphRAG method, EraRAG demonstrates a two-order-of-magnitude reduction in update overhead. These findings demonstrate that \texttt{EraRAG} is well-suited for real-world scenarios involving continuously evolving corpora, offering efficient handling of both large-scale and fine-grained incremental updates.

\begin{figure}[t]
  \centering
  \includegraphics[width=0.95\linewidth]{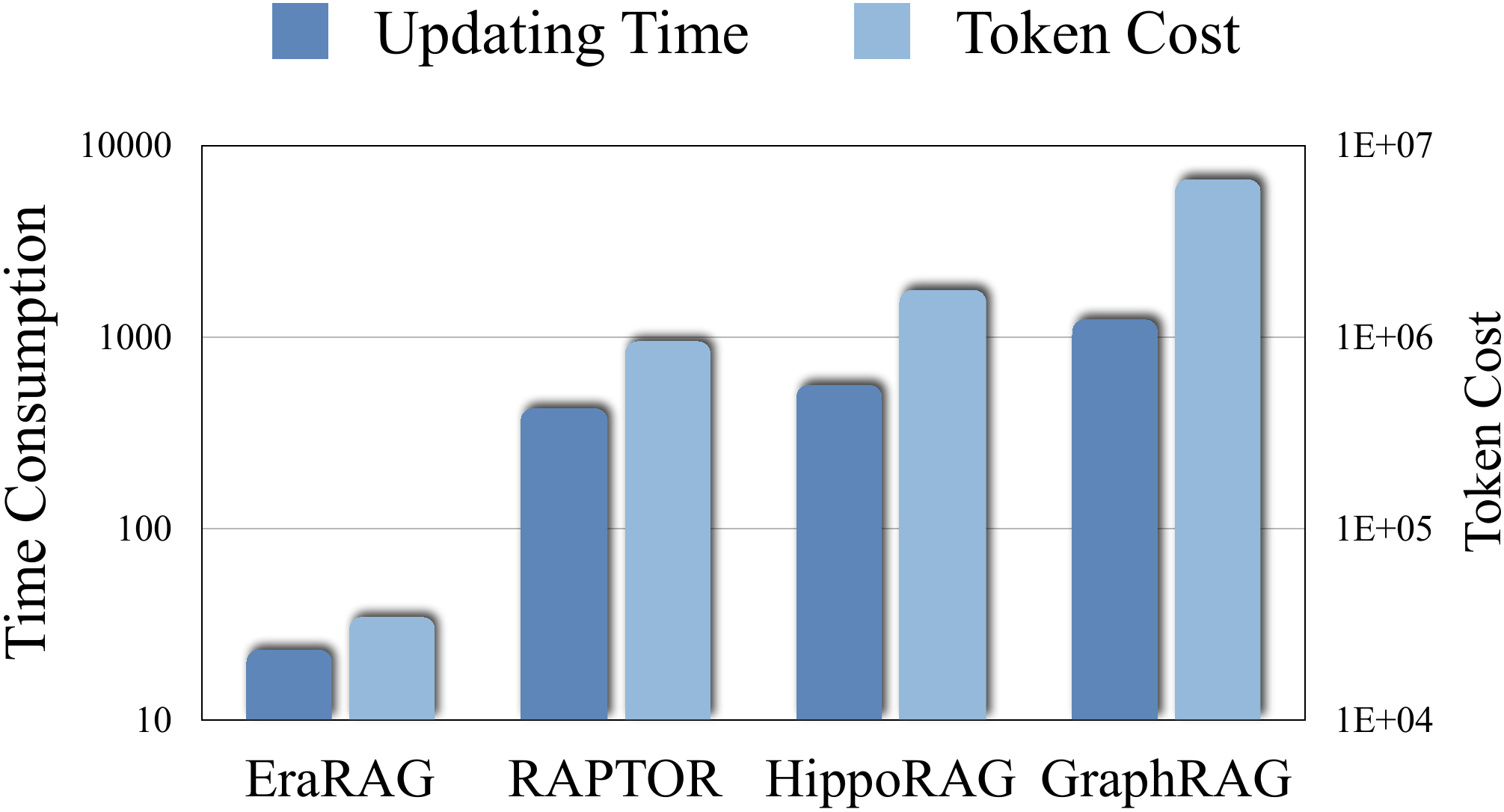}
    \caption{Token consumption and graph updating time under small-scale incremental insertions.}
  \label{fig:onechunk}
\end{figure}

\noindent\begin{tikzpicture}
\filldraw (0,0) -- (-0.15,0.08) -- (-0.15,-0.08) -- cycle ; 
\end{tikzpicture} \textbf{Exp-2: Abstract QA Performance of \texttt{EraRAG}.}
Aside from specific QA tasks, we conducted tests to evaluate \texttt{EraRAG}'s performance on abstract queries. For the abstract QA tasks, prior work~\cite{grag,lrag} proposed a LLM-guided evaluation method, where a LLM evaluator is utilized to evaluate the performance of two models based on comprehensiveness, diversity, empowerment, and a final overall result, which will be adapted in this section. The result of the evaluation will be displayed in a head-to-head win rate percentage form judged by a prompted LLM. We conduct the experiments against GraphRAG and RAPTOR on the well-tested abstract dataset UltraDomain~\cite{ultra} in domains of computer science, legal, and mixed knowledge. We also employ the abstract summary problems offered by Multihop-RAG, known as MultihopSum~\cite{multihop}. 

The experimental results are displayed in Table \ref{abstract}. Compared to GraphRAG, our model consistently achieves higher scores in comprehensiveness, diversity, and empowerment, with particularly strong gains on the CS and Legal domains. Against RAPTOR, our method also demonstrates superior performance across all metrics, notably improving results on the Mix and MultiSum datasets. These results highlight the robustness and generalizing ability of \texttt{EraRAG} for abstract query generation, outperforming both baselines in consistency and overall quality.

\noindent\begin{tikzpicture}
\filldraw (0,0) -- (-0.15,0.08) -- (-0.15,-0.08) -- cycle ; 
\end{tikzpicture} \textbf{Exp-3: Effect of Initial Graph Coverage on Retrieval Performance.} While \texttt{EraRAG} enables dynamic corpus expansion via selective re-construction, the quality of the final retrieval graph may still depend on the initial graph coverage. Sparse initialization can lead to structural noise or poor semantic representation, undermining the effectiveness of later insertions. This experiment quantifies how varying the initial graph ratio impacts final retrieval performance after all data is incorporated. We vary the initial coverage from 0\% to 100\%, inserting the remaining data incrementally using \texttt{EraRAG}. After all data is added, we evaluate the final Accuracy and Recall.
We perform this experiment using the MultihopQA dataset and the results are shown in Table~\ref{segmen}. We can see that both metrics improve with larger initial graphs. Recall grows quickly at low coverage, reflecting early semantic scaffolding, but continues to improve gradually throughout. Accuracy saturates around 50\%, indicating that a well-formed backbone graph enables precise retrieval, while small initial graphs cause structural drift that affects answer quality. These results confirm that the initial structure has a lasting impact, and that 50–70\% coverage offers a trade-off between performance and flexibility.

\noindent\begin{tikzpicture}
\filldraw (0,0) -- (-0.15,0.08) -- (-0.15,-0.08) -- cycle ; 
\end{tikzpicture} \textbf{Exp-4: Effect of Segment Size on Trade-offs in Structure and Efficiency.} \ifarxiv In \texttt{EraRAG}, the initial segmentation of text into semantically coherent chunks plays a critical role in determining the quality and efficiency of the subsequent graph construction. In this experiment, we study how varying segment size limitations affect the overall segmentation behavior and downstream retrieval performance. For clarity, we formalize the segmentation range using an average target length $\hat{c}$ and a tolerance threshold $\delta$, such that the upper and lower bounds for segment size are given by $\hat{c} + \delta$ and $\hat{c} - \delta$, respectively. Based on intuition, a larger $\delta$ allows the segments to better respect the natural semantic boundaries in the text, but may lead to uneven abstraction levels across nodes in the same layer, potentially impairing the coherence of hierarchical graph structures. Conversely, a smaller $\delta$ enforces stricter segment size uniformity, aiding hierarchy clarity but causing extra splits and merges, which may increase token usage and harm retrieval quality.

To test the above assumptions, we design an experiment with a fixed average segment length $\hat{c}$ and vary the tolerance threshold using scaled values: $0.5\cdot\delta$, $0.75\cdot\delta$, $\delta$, $1.5\cdot\delta$, and $2\cdot\delta$. We construct the initial graph using 50\% of the QuALITY dataset, followed by 10 rounds of incremental insertions. For each setting, we record the total token usage, graph construction time, and final retrieval accuracy.

Experimental results are displayed in Table~\ref{segment}. Results show that a moderate reduction in the tolerance threshold (e.g., $0.75\cdot\delta$) leads to better accuracy, supporting the idea that more uniform segmentation improves retrieval. However, this also increases token usage and graph rebuilding time. In contrast, increasing $\delta$ does not consistently reduce computation. At $2\cdot\delta$, accuracy drops sharply while time and token costs rise again. This suggests that overly loose segmentation raises uneven abstraction across layers, causing inserted chunks to bypass reuse and trigger costly re-summarization and subgraph updates. These results confirm that segmentation plays a critical role: although split and merge operations are inexpensive, the resulting re-summary operations dominate cost. Choosing appropriate bounds can help balance efficiency and retrieval quality.
\else
In \texttt{EraRAG}, ensuring that buckets are partitioned into appropriately sized segments is critical for constructing a well-structured and efficient retrieval graph.
 We investigate how varying the segment size tolerance $\delta$—with a fixed average length $\hat{c}$ and bounds $\hat{c} \pm \delta$—affects segmentation behavior and retrieval performance. Intuitively, a larger $\delta$ better preserves semantic boundaries but may yield uneven abstraction across nodes, harming graph coherence. Smaller $\delta$ enforces uniformity, aiding hierarchy but introducing unnecessary splits and merges that increase token cost and may degrade retrieval.

We evaluate five scaled thresholds: $0.5\cdot\delta$, $0.75\cdot\delta$, $\delta$, $1.5\cdot\delta$, and $2\cdot\delta$, using 50\% of QuALITY for initial graph construction followed by 10 incremental insertions. We record token usage, graph build time, and retrieval accuracy. As shown in Table~\ref{segment}, moderate tightening (e.g., $0.75\cdot\delta$) improves accuracy, confirming that uniform segmentation benefits retrieval, though at increased cost. Larger $\delta$ fails to reduce overhead and causes accuracy to drop, as loose segmentation leads to uneven abstraction and costly subgraph updates. This highlights segmentation’s impact: while splits and merges are cheap, their induced re-summarization dominates cost. Proper bounds thus help balance efficiency and quality.
\fi

\begin{table}
  \caption{Accuracy, token cost and graph building time of EraRAG with different turbulence thresholds on QuALITY.} \label{segment}
  \centering
  \scalebox{1.22}{
  \begin{tabular}{c|ccc}
  \toprule
    Threshold & Accuracy & Tokens & Rebuilding Time\\
    \midrule
    $0.5 \cdot \delta$ & 58.74 & 9.90M & 1025.03s \\
    $0.75\cdot \delta$ & 59.53 & 8.92M & 923.58s \\
    $\delta$ & 59.49 & 8.27M & 878.23 \\
    $1.5\cdot \delta$ & 58.32 & 8.25M & 851.09s \\
    $2\cdot \delta$ & 56.03 & 8.53M & 902.17s \\
    \bottomrule
  \end{tabular}
  }
\end{table}

\noindent\begin{tikzpicture}
\filldraw (0,0) -- (-0.15,0.08) -- (-0.15,-0.08) -- cycle ; 
\end{tikzpicture} \textbf{Exp-5: Robustness Across Backbone Language Models.} \ifarxiv To assess the stability of \texttt{EraRAG} across different backbone models, we conducted a comprehensive experiment using the MultihopRAG dataset. In this experiment, we replaced the original backbone LLM with \texttt{GPT3.5 turbo}~\cite{3.5} and \texttt{GPT4-o-mini}, both of which are widely recognized and frequently used in RAG benchmarking. The experiment involved a one-time insertion of the entire corpus using these two LLMs, while maintaining consistent hyperparameters across all runs. We record several performance metrics, including the graph construction time, token consumption, and the resulting F1 score, and compared these results with those obtained from the original setup. The comparison of these metrics is presented in Table~\ref{backbone}.

From the table, it is evident that the performance parameters of \texttt{EraRAG} remain generally stable, with the exception of a few anomalies. Notably, we observe a decline in the F1 score when utilizing \texttt{GPT3.5 turbo}. This decrease can primarily be attributed to a reduction in recall. The underlying cause may be that \texttt{GPT3.5 turbo} is designed as a more generalized model, whereas \texttt{LLaMA} is likely more specialized for retrieval and reasoning tasks. This specialization enables \texttt{LLaMA} to more effectively capture and retrieve detailed, relevant information, particularly for datasets like MultihopRAG, which emphasize intricate details. Additionally, the observed increase in token consumption and graph-building time may be linked to this factor. Nevertheless, \texttt{EraRAG} continues to demonstrate robustness across all backbone models, with fluctuations within 10\% for all performance parameters. This suggests that the performance of \texttt{EraRAG} is not heavily dependent on the choice of backbone model, making it inherently flexible for distribution and capable of performing effectively in various settings.
\else
To evaluate the robustness of \texttt{EraRAG} across different backbone models, we conduct an experiment on the MultihopRAG dataset by replacing the original LLM with \texttt{GPT3.5 turbo}~\cite{3.5} and \texttt{GPT4-o-mini}, both widely used in RAG benchmarks. A one-time insertion of the full corpus is performed under consistent hyperparameters, and we record the graph construction time, token consumption, and F1 score.

The results shown in Table~\ref{backbone} indicate that \texttt{EraRAG} maintains stable performance across models, with all metrics fluctuating within 10\%. A drop in F1 score is observed with \texttt{GPT3.5 turbo}, mainly due to reduced recall. This may stem from \texttt{GPT3.5 turbo}'s general-purpose design, whereas the original \texttt{LLaMA} backbone is more specialized for retrieval and reasoning, benefiting detail-heavy datasets like MultihopRAG. The increased token usage and graph-building time also correlate with this shift. Despite these differences, \texttt{EraRAG} still remains robust and flexible across backbones, supporting effective deployment in diverse environments.
\fi

\begin{table}
  \caption{F1 score, token cost, and graph building time of EraRAG with different backbone LLM.} \label{backbone}
  \centering
  \scalebox{1.22}{
  \begin{tabular}{c|ccc}
  \toprule
    Threshold & F1 & Token & Building Time\\
    \midrule
    Original & 51.03 & 1.87M & 102.3s \\
    \texttt{GPT3.5 trubo} & 49.57 & 1.96M & 112.9s \\
    \texttt{GPT4-o-mini} & 52.21 & 1.90M & 108.7s \\
    \bottomrule
  \end{tabular}
  }
\end{table}

\noindent\begin{tikzpicture}
\filldraw (0,0) -- (-0.15,0.08) -- (-0.15,-0.08) -- cycle ; 
\end{tikzpicture} \textbf{Exp-6: Case study on the correctness of \texttt{EraRAG}.} \ifarxiv To qualitatively evaluate the retrieval quality and robustness of \texttt{EraRAG} in incremental scenarios, we test it on thematic, multiple-choice questions derived from the complete version of the classical story \emph{The Wizard of Oz}. To assess robustness over successive additions, we employ the same incremental strategy as in the main experiment, utilizing 50\% of the corpus for the original graph and incorporating ten consecutive increments. Two types of queries are used in this experiment: detailed queries, which focus on specific information within the article, and summary queries, which require a general understanding of a paragraph or entire chapter. We track the chunks retrieved for these queries and the corresponding LLM responses. Typical results are presented in Figure~\ref{fig:detail}.

Statistical results indicate that \texttt{EraRAG} performs effectively for both detailed and summary queries, consistent with the findings from the main experiment. As illustrated in the typical results shown in Figure~\ref{fig:detail}, for a detailed query requiring a specific fact, \texttt{EraRAG} retrieves the leaf nodes containing the original corpus, utilizing the detailed information to provide an accurate response. In contrast, for summary queries, \texttt{EraRAG} searches the upper layers to identify the relevant summary nodes. Although some specific details may be omitted in the summary nodes (e.g., the exact color of the slippers, which is excluded in the corresponding summary node), these nodes still offer a thorough understanding of the grouped chunks, aiding the LLM in answering the query. From the summary text, it is evident that it captures the main storyline of the corresponding section of the chapter. This validates the effectiveness of the novel grouping mechanism we proposed, successfully aggregating adjacent text chunks of the corpus. Additionally, this demonstrates that our incremental strategy does not introduce hallucinations into the original graph, highlighting the robustness of our model in incremental scenarios.

\else
To qualitatively assess the retrieval quality and robustness of \texttt{EraRAG} in incremental settings, we evaluate it on thematic multiple-choice questions based on the full text of \emph{The Wizard of Oz}. Using the same setup as in the main experiment (50\% initial graph + ten increments), we test two query types: \emph{detailed queries} targeting specific facts, and \emph{summary queries} requiring broader contextual understanding. Retrieved chunks and corresponding LLM outputs are analyzed, with representative examples shown in Figure~\ref{fig:detail}.

Results show that \texttt{EraRAG} handles both query types effectively. For detailed queries, it retrieves relevant leaf nodes to extract precise facts. For summary queries, it selects upper-layer summary nodes that, while omitting some specifics (e.g., the slipper color), captures the core narrative, aiding accurate LLM responses. This demonstrates the effectiveness of our hierarchical grouping mechanism in aggregating related content. Moreover, no hallucinations are introduced during incremental updates, confirming the model’s robustness against evolving corpus. This illustrative example also further validates the effectiveness of the proposed customized retrieval mechanism.

\fi

On the other hand, we are also interested in the time distribution across different stages of each update, so we recorded the time spent on each procedure during one update. As shown in Figure~\ref{fig:time}, it is evident that re-summarization dominates the time distribution at all upper levels. Since no summary is generated at layer zero, the embedding update takes the majority of the time. Notably, the procedures other than summarization consume negligible time. This observation suggests that if \texttt{EraRAG} is distributed across localized small models, overall time consumption can be further reduced.

\begin{figure}[t]
  \centering
  \includegraphics[width=\linewidth]{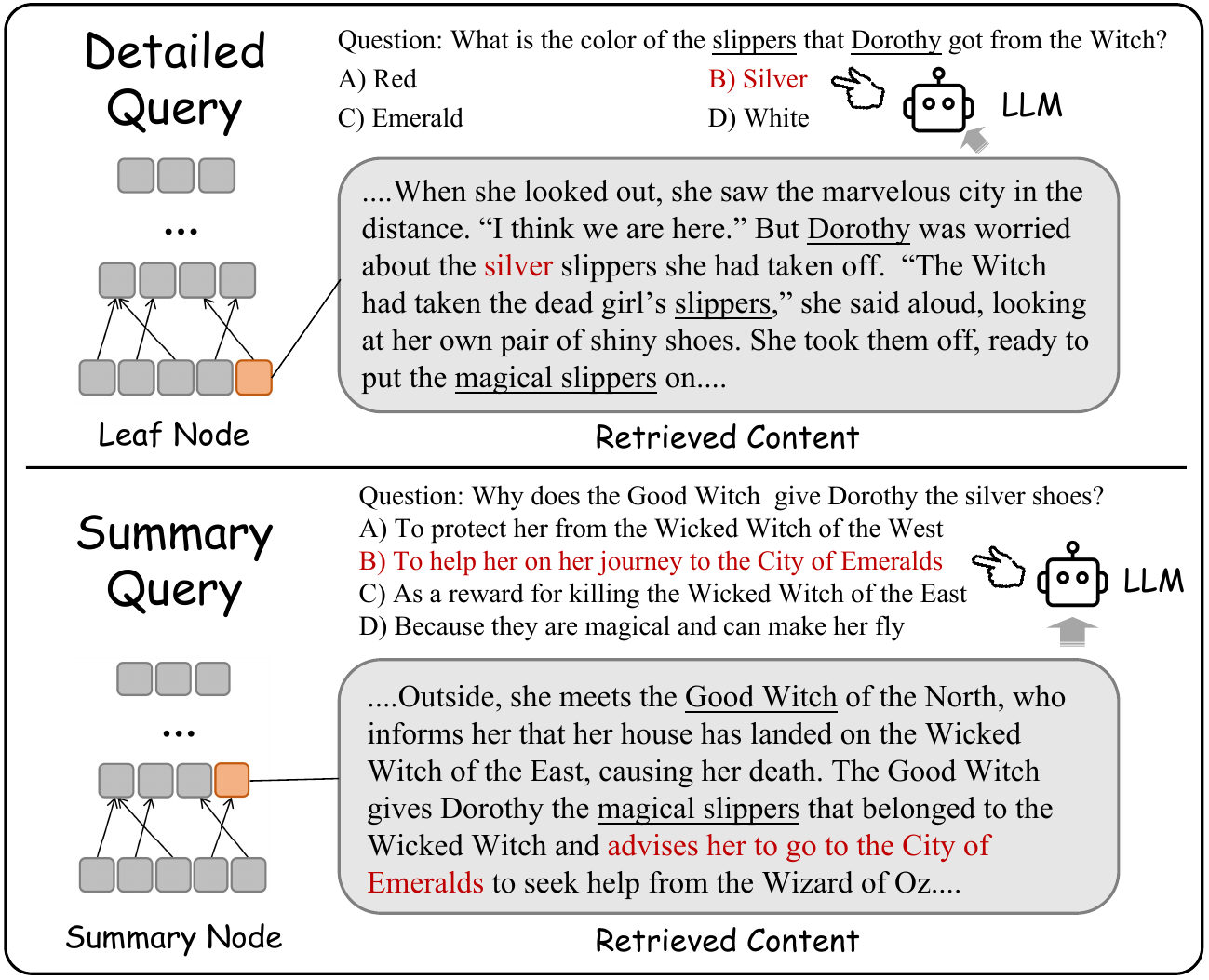}
    \caption{\textbf{Detailed Retrieval of \texttt{EraRAG}.} The \textcolor{darkred}{velvet} colored options are the correct ones. For detailed queries \underline{(top)}, the retrieval process targets leaf node chunks, which contain the original corpus chunks with in-depth information. For summary queries \underline{(bottom)}, summary node chunks are retrieved, utilizing generalized information from multiple paragraphs to address cross-paragraph queries. Enhanced with \texttt{Erarag}, the LLM answers correctly in both cases.}

  \label{fig:detail}
\end{figure}

\noindent\begin{tikzpicture}
\filldraw (0,0) -- (-0.15,0.08) -- (-0.15,-0.08) -- cycle ; 
\end{tikzpicture} \textbf{Exp-7: Effect of Chunk Size on Retrieval Accuracy and Graph Construction Efficiency.} Recent studies have shown that smaller, more focused chunks can improve RAG accuracy at the cost of higher computation~\cite{chunk1}. To assess this trade-off in \texttt{EraRAG}, we evaluate F1 score and graph-building time under varying chunk sizes via a one-time insertion on the MultihopRAG dataset. Results shown in Figure~\ref{fig:chunk} indicate that retrieval quality remains stable across chunk sizes, with F1 fluctuations within 5\%. Surprisingly, smaller chunks do not improve performance but increase graph-building time by 25\%, confirming their higher computational cost. Conversely, larger chunks slightly increase, rather than reduce, build time. Since embedding accounts for less than 1\% of total time (see Exp-6), the delay likely arises from longer LLM summarization for larger chunks. These findings suggest that chunk size has a limited impact on retrieval quality, but careful selection can optimize efficiency without compromising performance.

\begin{figure}[t]
  \centering
  \includegraphics[width=0.9\linewidth]{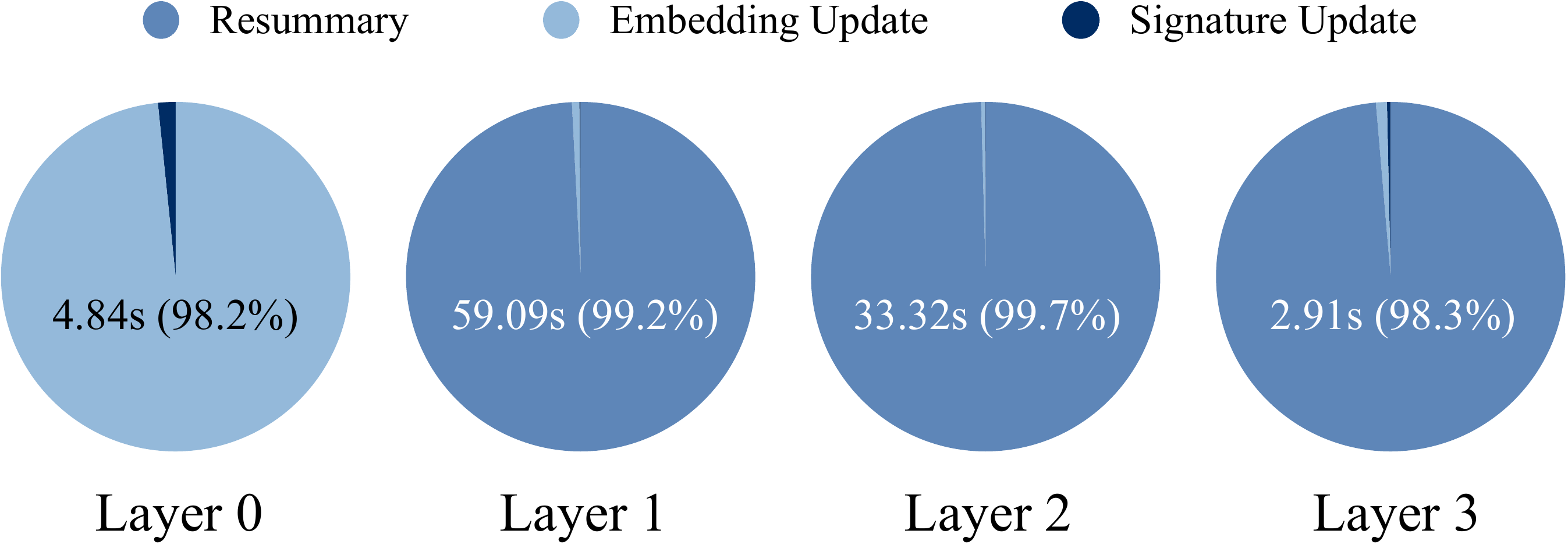}
    \caption{Time consumption of each procedure in graph re-construction for evolving corpora.}
  \label{fig:time}
\end{figure}

\begin{figure}[t]
  \centering
  \includegraphics[width=0.9\linewidth]{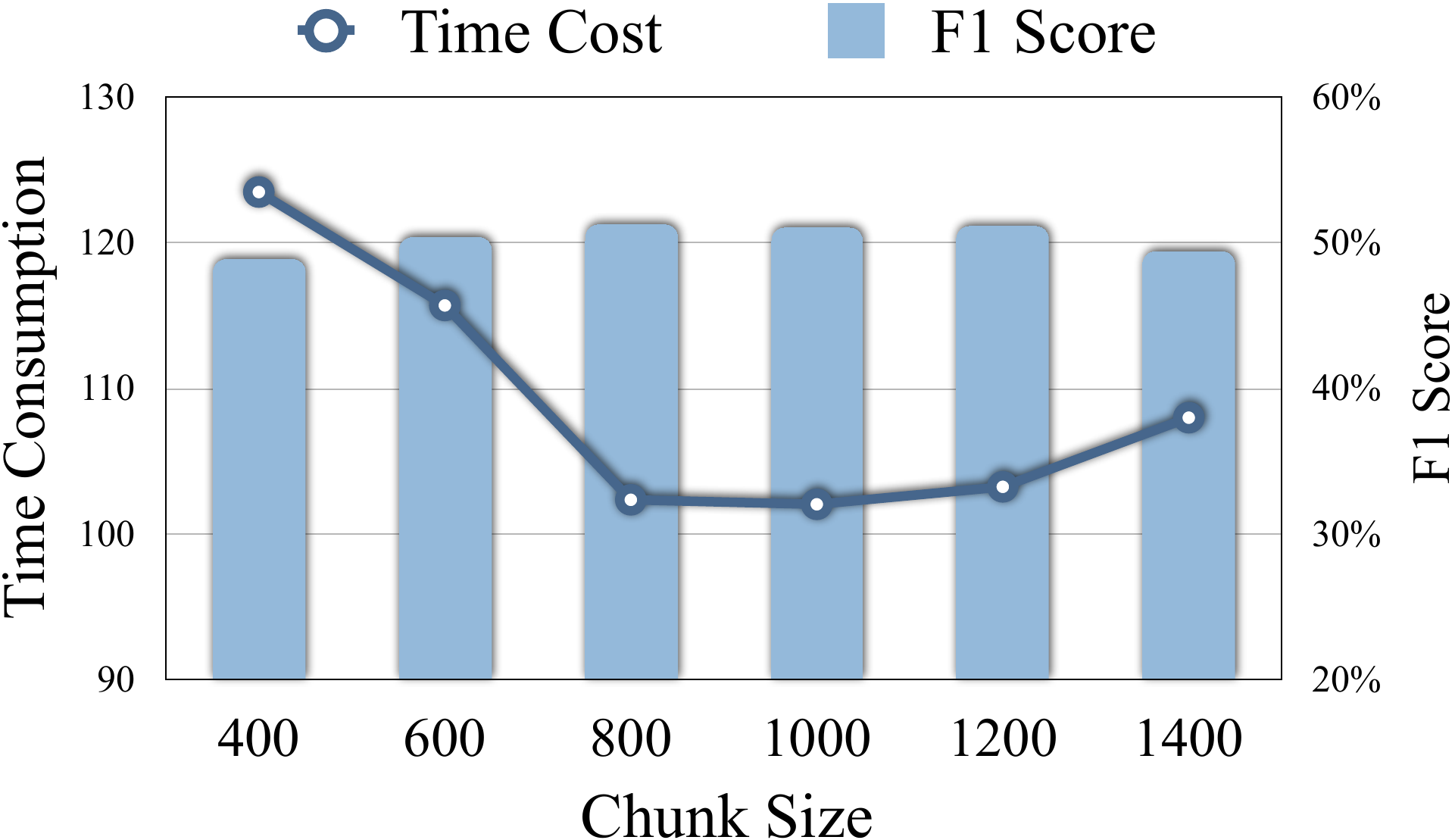}
  \caption{Building time and F1 score over different \textbf{chunk size}.}
  \label{fig:chunk}
\end{figure}

\ifarxiv

\noindent\begin{tikzpicture}
\filldraw (0,0) -- (-0.15,0.08) -- (-0.15,-0.08) -- cycle ; 
\end{tikzpicture} \textbf{Exp-8: Effect of Times of Insertions on Final Graph Quality.} We observe from Exp-3 that the initial graph coverage has a substantial impact on retrieval quality. This observation leads to a new research question: Given the same initial graph coverage, does the number of insertions for the remaining corpus affect retrieval performance. To investigate this, we design a follow-up experiment on the MultihopRAG dataset. We fix the initial graph coverage at 50\% of the corpus and incrementally insert the remaining 50\% in varying batch sizes: 20, 10, 5, and 1 insertion(s), respectively. For each setting, we measure the resulting Accuracy and Recall to assess the effect of insertion granularity on retrieval performance.
\begin{table}[t]
  \centering
      \caption{Query performance of EraRAG with different initial graph coverage.}
  \label{insertion}
  \scalebox{1.2}{
  \begin{tabular}{c|cccc}
  \toprule
    Insertion Batches & 20 & 10 & 5 & 1 \\
    \midrule
    Accuracy & 60.1 & 62.3 & 61.9 & 62.1 \\
    Recall & 38.5 & 39.3 & 39.8 &  38.2\\
    \bottomrule
  \end{tabular}
  }
\end{table}

From the results shown in Table~\ref{insertion}, we observe that the number of insertions has limited impact on overall retrieval quality. Recall remains relatively stable across all insertion settings, while Accuracy exhibits a slight decrease in the 20-insertion case. This degradation may stem from the accumulation of small-scale updates, which can subtly affect the structural coherence of the graph. Nevertheless, the overall fluctuation in both metrics remains within a reasonable margin of 5\%. These results empirically demonstrate the incremental robustness of \texttt{EraRAG}, indicating that varying the number of insertions has minimal effect on final retrieval performance.

\noindent\begin{tikzpicture}
\filldraw (0,0) -- (-0.15,0.08) -- (-0.15,-0.08) -- cycle ; 
\end{tikzpicture} \textbf{Exp-9: Effect of Summary Size on Retrieval Quality.} As illustrated in the pipeline of \texttt{EraRAG}, the summarization stage plays a critical role in determining retrieval effectiveness. High-quality and informative summaries can significantly enhance the LLM's ability to generate accurate responses during the query stage. This naturally raises an important question: given a fixed chunking size, how does the length of the summary affect the overall retrieval performance. To investigate this, we conduct a controlled experiment on the MultihopRAG dataset. Specifically, we perform a one-time insertion with a fixed chunk size of 1000 tokens. The summary token limit is varied from 800 to 1200 tokens in increments of 100, with additional evaluations at two extreme settings: 500 and 1500 tokens. For each setting, we measure both the graph construction time and the resulting F1 score to assess the trade-off between summarization granularity and retrieval effectiveness. The goal is to assess how summary length influences both system latency and retrieval accuracy. The experimental results are presented in Figure~\ref{fig:summary}.

\begin{figure}[t]
  \centering
  \includegraphics[width=\linewidth]{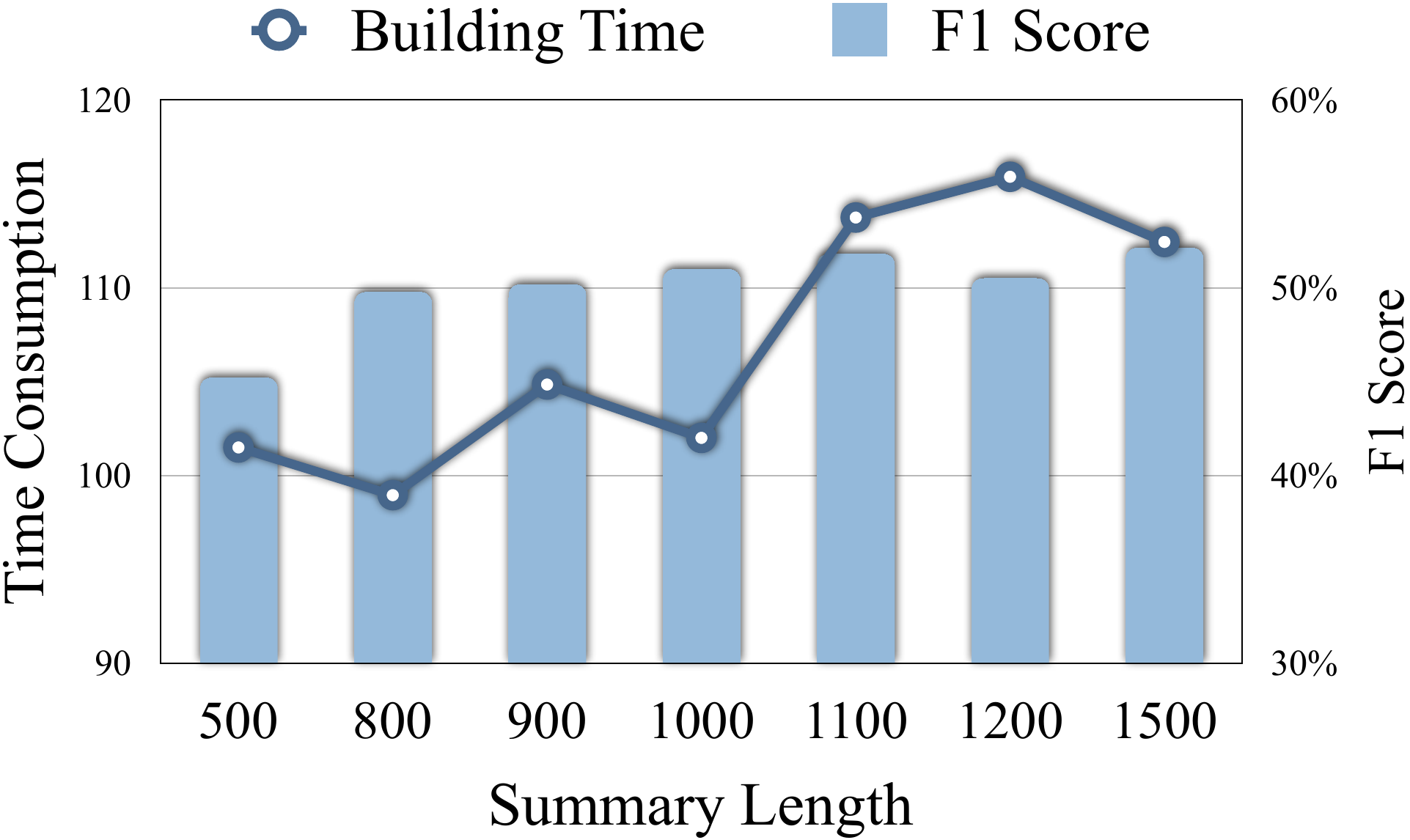}
    \caption{Building time and F1 score over different \textbf{summary length}.}

  \label{fig:summary}
\end{figure}

From the results, we observe that increasing the summary token limit generally leads to longer graph construction time, as the LLM is encouraged to generate more extensive summaries. However, the impact on overall graph building time remains modest, with fluctuations contained within 10\%. In terms of retrieval quality, several interesting trends emerge. Longer summaries tend to provide marginal improvements in F1 score, likely due to the inclusion of more informative content. However, this improvement is limited in scale, and the slight drop observed at the 1200-token setting may reflect dataset-specific variability rather than a systematic decline. The performance rebound at 1500 tokens suggests that the method remains robust to modest overextension of summary length. In contrast, decreasing the summary length yields a consistent decline in retrieval quality. This is particularly evident at the 500-token setting, where excessive condensation likely leads to loss of critical information, thus compromising the effectiveness of the summaries. These findings suggest that each corpus may have an optimal summary length that strikes a balance between summarization efficiency and retrieval accuracy. While exhaustively identifying this ideal setting could be computationally expensive, selecting a moderately larger summary limit offers a practical compromise—ensuring strong retrieval performance without incurring significant overhead.

\noindent\begin{tikzpicture}
\filldraw (0,0) -- (-0.15,0.08) -- (-0.15,-0.08) -- cycle ; 
\end{tikzpicture} \textbf{Exp-10: In-Depth Analysis of Customized Insertion.} Building on the findings from Exp-6, we conduct a more in-depth evaluation of our customized retrieval strategy. In this section, we further analyze its effectiveness through experiments on the HotpotQA dataset. HotpotQA consists of two types of questions: \emph{bridge-type} and \emph{comparison-type}. Bridge-type questions typically require retrieving specific facts from multiple paragraphs and thus represent fine-grained queries. In contrast, comparison-type questions demand reasoning across entities and a higher-level understanding of the context, corresponding to coarse-grained queries. To assess the adaptability of our retrieval mechanism across different query types, we apply two customized search configurations to each subset. Specifically, we vary the portion parameter—which determines the percentage of leaf nodes retrieved—using values of 0.8 and 0.2, respectively. The experiments share the same graph and different search strategies are employed. We then compare the results with the default retrieval performance on the full HotpotQA dataset. This setup allows us to evaluate how the retrieval scope influences performance on detailed versus abstract questions.

\begin{table}[t]
  \centering
      \caption{Query performance for different retrieval modes}
  \label{retrieval}
  \scalebox{1.2}{
  \begin{tabular}{c|ccc}
  \toprule
    Retrieval Mode & Default & Detail & Broad  \\
    \midrule
    Accuracy & 55.39 & 57.18 & 56.54  \\
    Recall & 61.43 & 61.98 & 61.51 \\
    \bottomrule
  \end{tabular}
  }
\end{table}

The results are presented in Table~\ref{retrieval}. The \textbf{Detail} column reports the performance of our customized retrieval strategy with \texttt{portion} set to 0.8 on the fine-grained (bridge-type) query subset, while the \textbf{Broad} column shows the results with \texttt{portion} set to 0.2 on the coarse-grained (comparison-type) query subset. We observe that leveraging prior knowledge about the query type allows the customized retrieval strategy to significantly enhance performance on the corresponding subset. Specifically, using a larger retrieval portion improves accuracy for fine-grained questions by preserving more local context, whereas using a smaller retrieval portion benefits coarse-grained questions by promoting higher-level abstraction and focus. These results empirically validate the effectiveness and correctness of our customized query mechanism in adapting retrieval scope to query intent.

\fi

\section{Conclusion}
This paper presents \texttt{EraRAG}, a scalable and efficient graph-based retrieval-augmented generation framework designed to support dynamic corpus updates without full reconstruction. By leveraging a hyperplane-based locality-sensitive hashing mechanism and selective re-clustering, \texttt{EraRAG} enables incremental graph construction while preserving the semantic integrity of previously established structures. Extensive experiments across five QA benchmarks demonstrate that \texttt{EraRAG} consistently outperforms existing RAG baselines in both Accuracy and Recall. Furthermore, our analysis shows that \texttt{EraRAG} significantly reduces token consumption and graph construction time during updates, achieving up to 57.6\% and 77.5\% savings over the next best baseline, respectively. Additional incremental evaluation confirms that \texttt{EraRAG} maintains retrieval quality on par with fully rebuilt graphs, validating its robustness under continual data growth. Overall, EraRAG offers a principled and practical solution for efficient, structure-preserving retrieval in dynamic real-world applications.

\clearpage

\footnotesize
\balance
\bibliographystyle{abbrv}  




\end{document}